\newcommand{\name}{\textbf{FOOD}\xspace}
\newcommand{\systemname}{\textsc{\textbf{Cook}}\xspace}
\definecolor{light-gray}{gray}{0.85}
\newcommand{\an}[3]{}
\newcommand{\cristina}[1]{\an{cristina}{green}{#1}}
\bfseries\color{blue},
\itshape\color{teal},
\keywords{Bidirectional program transformation, Functional decomposition, Object-oriented decomposition} %TODO mandatory; please add comma-separated list of keywords
\begin{document}

\renewcommand{\sectionautorefname}{Section}
\renewcommand{\subsectionautorefname}{Section}

\title{Decomposition Without Regret}

% \keywords{Expression Problem, Compositionality, Traits}

\author{Weixin Zhang}
\author{Cristina David}
\author{Meng Wang}
\affiliation{\institution{University of Bristol} \city{Bristol} \country{United Kingdom}}
\email{{weixin.zhang, cristina.david, meng.wang}@bristol.ac.uk}

%\EventEditors{John Q. Open and Joan R. Access}
%\EventNoEds{2}
%\EventLongTitle{42nd Conference on Very Important Topics (CVIT 2016)}
%\EventShortTitle{CVIT 2016}
%\EventAcronym{CVIT}
%\EventYear{2016}
%\EventDate{December 24--27, 2016}
%\EventLocation{Little Whinging, United Kingdom}
%\EventLogo{}
%\SeriesVolume{42}
%\ArticleNo{23}
%%%%%%%%

\begin{abstract}
Programming languages are embracing both functional and object-oriented paradigms. 
A key difference between the two paradigms is the way of achieving data abstraction. 
That is, how to organize data with associated operations.  
There are important tradeoffs between functional and object-oriented decomposition in terms of extensibility and expressiveness. 
Unfortunately, programmers are usually forced to select a particular decomposition style in the early stage of programming.  
Once the wrong design decision has been made, the price for switching to the other decomposition style could be rather high since pervasive manual refactoring is often needed. 
To address this issue, this paper presents a bidirectional transformation system between functional and object-oriented decomposition. 
We formalize the core of the system in the \name calculus, which captures the essence of functional and object-oriented decomposition.
We prove that the transformation preserves the type and semantics of the original program. 
We further implement \name in Scala as a translation tool called \systemname and conduct several case studies to demonstrate the applicability and effectiveness of \systemname.
\end{abstract}

\maketitle
\section{Introduction}\label{sec:intro}
Programming languages are embracing multiple paradigms, in particular functional and object-oriented paradigms.
Modern languages are designed to support multi-paradigms. Well-known examples are OCaml, Swift,
Rust, TypeScript, Scala, F\#, Kotlin, to name a few. % R(Evaluating the Design of the R Language)
Meanwhile, mainstream object-oriented languages such as C++ and Java are gradually extended
to support functional paradigms.  When multiple paradigms are available within one programming language, a natural question to ask is: 
\emph{which paradigm should a programmer choose when designing programs}?
% Haskell’s overlooked object system

% destruction vs construction
A fundamental difference between functional and object-oriented paradigms is
the way of achieving \emph{data abstraction}~\citep{Reynolds78,cook2009}. That is, how to organize data with associated operations.  
Typically, object-oriented decomposition is \emph{operation first}: we first declare an
interface that describes the operations supported by the data
and then implement that interface with some classes.  Conversely, functional decomposition is \emph{data first}: we first represent the data 
using an algebraic datatype and then define operations by pattern matching on that algebraic datatype.

\autoref{decomposition} manifests the difference between object-oriented and functional decomposition by implementing an evaluation operation on literals and subtractions 
with both styles in Scala. The object-oriented decomposition version shown on the left hand side models expressions as a class hierarchy,
where the interface \lstinline{Exp} declares an abstract \lstinline{eval} method and two classes \lstinline{Lit} and \lstinline{Sub} concretely implement \lstinline{eval}. 
An \lstinline{Exp} object is created via calling \lstinline{new} on the classes and then evaluated by invoking the \lstinline{eval} method.
In contrast, the functional decomposition version shown on the right-hand side defines \lstinline{Exp} as an algebraic datatype with 
\lstinline{Lit} and \lstinline{Sub} being the constructors.  The \lstinline{eval} operation is separately defined as a pattern matching function on \lstinline{Exp}.
An \lstinline{Exp} instance is created by calling the constructors and then evaluated by applying the \lstinline{eval} function.

\begin{figure*}[t]
\begin{minipage}{.48\textwidth}
\lstinputlisting[linerange=3-14]{./app/src/main/scala/ExpOOP.scala}%APPLY:OOP
\end{minipage}
\begin{minipage}{.51\textwidth}
\lstinputlisting[linerange=3-14]{./app/src/main/scala/ExpFP.scala}%APPLY:FP
\end{minipage}
\caption{Object-oriented decomposition (left) vs. functional decomposition (right) in Scala.}
\label{decomposition}
\end{figure*}

% two \emph{constructors} for zero (\lstinline{Zero}) and successors (\lstinline{Succ}).
% The \lstinline{eval} function is defined outside of the case class hierarchcy
% by pattern matching over values of \lstinline{Exp} and calling itself recursively
% on the child nodes.

% \emph{data abstraction}~\cite{Reynolds78,Cardelli85,Cook09abstraction}

% the tradeoff between these two styles
There are important tradeoffs between functional and object-oriented decompositions
in terms of extensibility and expressiveness. %have their own strengths and weaknesses.
As acknowledged by the notorious Expression Problem~\citep{Reynolds78,cook1990object,expPb}, these two decomposition
styles are complementary in terms of \emph{extensibility}. Object-oriented
decomposition makes it easy to extend data variants through defining new classes. 
For example, negations can be added to the \lstinline{Exp} hierarchy modularly:
\lstinputlisting[linerange=17-17]{./app/src/main/scala/ExpOOP.scala}%APPLY:OOP_NEG
On the other hand, functional decomposition makes it easy to add new operations such as simplification 
on expressions: 
\lstinputlisting[linerange=18-21]{./app/src/main/scala/ExpFP.scala}%APPLY:SIMPLIFY
% Extensions on \lstinline{Neg} and \lstinline{simplify} also illustrate the difference between
Besides extensibility, object-oriented and functional decomposition have different expressive power. 
Object-oriented decomposition facilitates code reuse through 
inheritance (e.g., \lstinline{Neg}) and enables \emph{interoperability}~\citep{jonathan2013} 
between different implementations of the same interface whereas functional decomposition allows inspection on the internal 
representation of data through (nested) pattern matching, simplifying abstract syntax tree transformations (e.g., \lstinline{simplify}).

Unfortunately, programmers are forced to decide a particular decomposition
style in the early stage of programming. A proper choice, however,
requires predictions on the extensibility dimension and kinds of operations to model,
which may not be feasible in practice. Once the wrong design decision has been made, 
the price for switching to the other decomposition style could be rather high 
since pervasive manual refactoring is often needed. For example, to convert the 
left-hand side of \autoref{decomposition} to its right hand side, one has to: 
1) change \lstinline{class}es to \lstinline{case class}es; 
2) extract \lstinline{eval} out of the class hierarchy, add an argument of type \lstinline{Exp}, and  
 merge the implementations using \lstinline{case} clauses; 
3) switch method selections to function applications, e.g. \lstinline{l.eval} to \lstinline{eval(l)};
4) remove \lstinline{new}s on \lstinline{Lit} and \lstinline{Sub}. 
Such manual transformation from object-oriented decomposition to functional decomposition is rather tedious and error-prone and so as for the other way round.

A better way, however, allows programmers to choose a decomposition style for 
prototyping without regret. When the design choice becomes inappropriate, a tool automatically
transforms their code into the other style without affecting the semantics. 
The need for switching decomposition styles actually happens in the real world. 
For example, to grow their simple SQL processor into a realistic SQL engine, \citet{rompf2019sql} switch the decomposition style from functional to object-oriented
for supporting a large number of operators. 
Even at later stages, such a automatic translation tool could be used to make extensions of data variants or operations easier
by momentarily switching the decomposition, adding the extension, and then transforming the program back to the original decomposition.
Furthermore, studying the transformation between the two styles can provide a theoretical foundation for compiling multi-paradigm languages into single-paradigm ones.
From an educational perspective, the tool can help novice programmers to understand both decomposition styles better.

%bijective
To address this issue, we propose a bidirectional transformation between
functional and object-oriented decomposition based on the observation that restricted forms
of functional and object-oriented decomposition are \textit{symmetric}. 
Our bidirectional transformation is greatly inspired by \citet{cook2009}'s pure object-oriented programming
and the line of work on the duality of data and codata~\cite{rendel2015,klaus2018,downen2019codata,popl2019}.
The main novelty of our work is an automatic, type-directed transformation formalized in the core calculus \name, which captures
the essence of Functional and Object-Oriented Decomposition.
Unlike existing work, \name does not require new language design and can be directly applied to existing multi-paradigm languages.
We implement \name in Scala as a translation tool called \systemname
and conduct several case studies to demonstrate the applicability of \systemname.
% \name is implemented in Scala.
% Moreover, the type-directed transformation approach adopted by \name can properly handle many object-oriented programming features that previous approaches cannot.
% Although there is an inherent connection between codata and objects, existing syntax-directed transformation approaches are not applicable to existing multi-paradigm languages

% Inheritance, subtyping, mutable state are orthogonal to codata
% it combines data with operations under a formalized interface.
% functions are first-class
% decomposition - breaking complex systems into components
%abstract class/interfaces, interactions, composition/inheritance, dynamic binding/polymorphism
In summary, the contributions of our work are:
\begin{itemize}
\item We introduce functional and object-oriented decomposition 
and show their correspondence (\autoref{sec:overview}).
\item We formalize the type-directed bidirectional transformation between functional and object-oriented decomposition in the \name calculus 
and show how to concretely transform programs using \name (\autoref{sec:formalization}).
\item We give the semantics of \name and prove that the transformation preserves the type and semantics of the original program(\autoref{sec:soundness}).
\item We implement the approach in Scala as a translation tool called \systemname, and conduct several case studies to demonstrate its applicability (\autoref{sec:case}).
\item We discuss features beyond \name and sketch how \name can be ported to other multi-paradigm languages (\autoref{sec:discussion}).
% available online at~
% \footnote{Due to the anonymous submission, we submit these as supplementary materials.}:
\end{itemize}

Examples, case studies, complete rules, proofs, and the implementation of \systemname can be found in the \textbf{supplementary material}.

\section{Background}\label{sec:overview}
In this section, we introduce functional and object-oriented decomposition and show their correspondence.
To facilitate the discussion, we reuse the integer sets presented by \citet{cook2009} as a running example.
% constructorization/destructorization
% terminology: transpose/transposition

\begin{figure}[t]
% \noindent\begin{minipage}{.4\textwidth}
% \begin{lstlisting}
% trait Set {
%   def isEmpty: Boolean
%   def contains(i: Int): Boolean
%   def insert(i: Int): Set = 
%     if (this.contains(i)) this
%     else new Insert(this,i)
%   def union(s: Set): Set = new Union(this,s)
% }

% object Empty extends Set {
%   def isEmpty = true
%   def contains(i: Int) = false
%   override def union(s: Set) = s
% }
% \end{lstlisting}
% \end{minipage}\hfill
% \begin{minipage}{.6\textwidth}
% \begin{lstlisting}
% class Insert(s: Set, n: Int) extends Set {
%   def isEmpty = false
%   def contains(i: Int) = i == n || s.contains(i)
% }

% class Union(s1: Set, s2: Set) extends Set {
%   def isEmpty = s1.isEmpty && s2.isEmpty
%   def contains(i: Int) = s1.contains(i) || s2.contains(i)
% }

% Empty.insert(1).union(Empty.insert(3)).contains(3)
% \end{lstlisting}
% \end{minipage}
\lstinputlisting[linerange=3-28]{./app/src/main/scala/SetOOP.scala}%APPLY:SET_OOP
\caption{Object-oriented decomposition for integer sets}
\label{setoop}
\end{figure}

\subsection{Object-oriented decomposition}

% \paragraph*{Pure object-oriented programming}
The object-oriented decomposition style used throughout this paper follows
\citet{cook2009}'s definition of pure object-oriented programming (OOP), which is arguably the essence of OOP.
Here are the key principles of pure OOP:
\begin{enumerate}
  \item Objects are first-class values;
  \item Interfaces are used as types not classes;
  \item Objects are accessed through their public interfaces
\end{enumerate}
In pure OOP, a class is an object \emph{generator}, which is a procedure that returns a value satisfying an interface.
The internal representation of an object is hidden to other objects.
Although most OOP languages do not strictly follow the pure OOP style, programming in pure OOP style brings several advantages,
resulting in more flexible software systems.
More importantly, as we shall see later, it allows the code to be transformed to functional decomposition style.
% Cook “an object is a value exporting a procedural interface to data or behavior"
% Such autognosis brings flexibility but make it hard to model optimizations.
% object interfaces do not use type abstraction

\paragraph*{Interfaces}
A pure OOP implementation of integer sets is given in~\autoref{setoop}.
The interface \lstinline{Set} describes four operations supported by an integer set: 
\lstinline{isEmpty} checks whether a set is empty; 
\lstinline{contains} tells whether an integer is in the set;
\lstinline{insert} puts an integer to a set if it is currently not in that set; 
\lstinline{union} merges two sets.
These methods are often referred as \textit{destructors}, which tear down an object to some other type.
Special destructors whose return types are \lstinline{Set} (e.g. \lstinline{insert} and \lstinline{union}) 
are also known as \emph{producers} or \emph{mutators}.
% \lstinline{insert} and \lstinline{union} have a default implementation that uses the corresponding class 

% \paragraph*{Interface polymorphism} %inclusion polymorphism %subtype polymorphism
\paragraph*{Classes}
The \lstinline{Set} interface is implemented by three classes \lstinline{Empty}, \lstinline{Insert}, and \lstinline{Union}.
The singleton class \lstinline{Empty} (\lstinline{object} in Scala) represents an empty set;
The \lstinline{Insert} class models the insertion of an integer to a set;
The \lstinline{Union} class models the union of two sets. 
As shown by \lstinline{insert} and \lstinline{union}, the sole use of classes is in \lstinline{new} expressions for creating objects, following the principles of pure OOP.

\paragraph*{Inheritance and method overriding}
Although inheritance is not an OOP-specific feature~\cite{CookPalsberg}, it is commonly available in OOP languages for code reuse.
To avoid duplicating the same implementation in different classes, both \lstinline{insert} and \lstinline{union} come with a default implementation in the \lstinline{Set} interface.
These default implementations are inherited by \lstinline{Set}'s classes so that explicit definitions for \lstinline{insert} and \lstinline{union} can be omitted.
An exception is \lstinline{Empty}, which \emph{overrides} the \lstinline{union} destructor to return the other set immediately instead 
of creating a \lstinline{Union} object. Through \emph{dynamic dispatching}, the overridden definitions will be selected instead of the inherited ones.
% fine-grained code reuse.

% Now, following the principle of ``composition over inheritance'' brings one extra benefit---keeping the code free from decomposition style.

% may consider harmful.  These forms of inheritance can be encoded into delegations.

\paragraph*{This}
A special variable \lstinline{this} (\lstinline{self} or \lstinline{me} in other OOP languages) is in the scope of an interface or a class
for referring to the object itself.
As illustrated by the default implementation of \lstinline{insert}, \lstinline{this} is used for constructing a (\lstinline{new Insert(this,i)}))
or invoking other methods (can often be omitted \lstinline{contains(i)}).

% \paragraph*{Complex hierarchies}
% Traditional algebraic datatypes are \emph{flat}, where variants have no relationship among each other. 
% On the other hand, can be \emph{hierarchical} where a class may not directly extend the root but some intermediate class or datatype~\cite{castor}. 
% This way of organizing classes promotes finer-grained reuse. Naively transforming hierarchical 
% datatypes is problematic since case classes are \emph{final}, i.e. extending a case class

There are other features that are often (wrongly) recognized as OOP features, such as \emph{mutable state} and \emph{subtyping}.
As \citet{cook2009} argues, these features are not essentials of OOP. 
We will discuss these features in \autoref{sec:morefeatures}. 
% And more importantly, they do not have a clear correspondence in functional decomposition. 
% In other words, using these features will fix the decomposition style to be object-oriented.

% \emph{private methods}
% There are arguments. Private methods are code smells: they violate single responsibility principle;
% breaks abstraction; making the program untestable.
% Now, there is one more reason to 

\begin{figure}[t]
\lstinputlisting[linerange=3-29]{./app/src/main/scala/SetFP.scala}%APPLY:SET_FP
\caption{Functional decomposition for integer sets}
\label{setfp}
\end{figure}

\subsection{Functional decomposition}
% TODO: definition of functional decomposition
The essence of functional decomposition is arguably algebraic datatypes plus pattern matching functions.
A functional decomposition implementation of integer sets is given in \autoref{setfp}.

\paragraph*{Algebraic datatypes}
Now, \lstinline{Set} is modeled as an \emph{algebraic datatype} (\lstinline{sealed trait} in Scala) with \lstinline{Empty}, \lstinline{Insert} and \lstinline{Union} 
captured as \emph{constructors} (\lstinline{case class} in Scala) of \lstinline{Set}. Unlike classes, a \lstinline{Set} object is created by calling the constructors without \lstinline{new}, e.g. \lstinline{Insert(Empty,3)}.

\paragraph*{Pattern matching functions}
Four functions, \lstinline{isEmpty}, \lstinline{contains}, \lstinline{insert}, and \lstinline{union} are defined on the \lstinline{Set} datatype.
These functions are called \emph{consumers} on \lstinline{Set} since their first argument \lstinline{self} is a datatype.
A consumer is typically defined through \emph{pattern matching} (\lstinline{match} clause in Scala) on the datatype and giving a \lstinline{case} clause 
for each constructor of that datatype. For example, \lstinline{isEmpty} and \lstinline{contains} are implemented in this manner. 

\paragraph*{Wildcard pattern}
Sometimes it is tedious to repeat case clauses for constructors that have a common behavior.  
For such cases, the wildcard pattern is often used for giving a default behavior to all boring cases at once. 
For instance, the \lstinline{union} consumer uses a wildcard pattern, \lstinline{case _ => Union(self,that)},
to handle non-\lstinline{Empty} cases altogether. 

% \paragraph*{Guarded patterns}

% \subsection{Restricted pattern matching}
% “disentanglement” unnesting
% Ager et al. 2003. Setzer et al. (2014)
% \paragraph*{Pattern with guards}

\begin{comment}
\begin{quote}
It is not clear what nested pattern matching or pattern matching on multiple arguments should be refunctionalized to.
Fortunately, the restriction to top-level pattern matching on the first argument does not restrict the expressiveness of the language
because we can desugar nested pattern matching or pattern matching on multiple arguments by introducing helper functions.
\end{quote}
\end{comment}

\begin{figure}
  \centering
  \begin{tabular}{rcl}
\textbf{Object-oriented} & & \textbf{Functional} \\
Interface & $\Longleftrightarrow$ & Datatype \\
Destructor & $\Longleftrightarrow$ & Consumer \\
Generator & $\Longleftrightarrow$ &  Constructor \\
Interface-based inheritance & $\Longleftrightarrow$ & Wildcard pattern\\

% Default method & $\Longleftrightarrow$ & Wildcard pattern\\
  \end{tabular}
  \caption{The correspondence between object-oriented and functional decomposition}
  \label{correspondence}
\end{figure}

\subsection{The correspondence between functional and object-oriented decomposition}
Having a closer look at \autoref{setoop} and \autoref{setfp}, we can see that they are essentially \emph{symmetric}.
Their correspondence is revealed by \autoref{correspondence}, where features in one style has a one-to-one mapping in the other style. 
In other words, it is possible to transform from one decomposition style to the other back and forth without losing information.
\autoref{sec:formalization} will formalize the bidirectional transformation and concretely show how to transform the two versions of integer sets 
discussed here.

\newcommand{\kw}[1]{\texttt{#1}}
\newcommand{\hast}{\!:\!}
\newcommand{\app}[2]{#1 \; #2}
\newcommand{\tapp}[2]{#1 \; @ #2}
\newcommand{\mergeop}{\; ,, \;}
\newcommand{\lam}[3]{\lambda (#1 \hast #2).\binderspacing #3}
\newcommand{\blam}[2]{\Lambda #1.\binderspacing #2}
\newcommand{\proj}[2]{{\code{proj}}_{#1} #2}
\newcommand{\reccon}[2]{\{ #1=#2 \}}
\newcommand{\recty}[2]{\{ #1 \hast #2 \}}
\newcommand{\letexpr}[3]{\kwlet \; #1 = #2 \; \kwin \; #3}
\newcommand{\context}[2]{#1 \; {\vdash} \; #2}

\newcommand{\eval}[2]{#1 \longrightarrow #2}

\newcommand{\sig}{\textsc{Sig}}
\newcommand{\defs}{\textsc{Def}}
\newcommand{\dtrBody}{\textsc{dtrBody}}
\newcommand{\csmBody}{\textsc{csmBody}}
\newcommand{\dtrType}{\textsc{dtrSig}}
\newcommand{\body}{\textsc{body}}
\newcommand{\fun}[2]{\dec{#1} = #2}
\newcommand{\sealed}[1]{\kwdata \; #1}
\newcommand{\inter}[2]{\kwabstract \; \kwclass \; #1 \{ \overline{#2} \}}
\newcommand{\interface}[2]{\kwinterface \; #1 \; \{ \overline{#2}\}}
\newcommand{\new}[2]{\kwnew \; #1(#2)}
\newcommand{\newSeq}[2]{\new #1{\overline{#2}}}
\newcommand{\ctr}[2]{\kwcase \; #1(\overline{x : T}) \; \kwextends \; #2}
\newcommand{\classf}[5]{\kwclass \; #1(\overline{#2 : #3}) \; \kwimplements \; #4 \; \{ #5 \}}
\newcommand{\class}[3]{\classf #1 x T #2 {\overline{#3}}}

\newcommand{\kwinterface}{\kw{interface}}
\newcommand{\kwextends}{\kw{extends}}
\newcommand{\kwimplements}{\kw{implements}}
\newcommand{\kwtrait}{\kw{trait}}
\newcommand{\kwdata}{\kw{data}}
\newcommand{\kwabstract}{\kw{abstract}}
\newcommand{\kwclass}{\kw{class}}
\newcommand{\kwsealed}{\kw{sealed}}
\newcommand{\kwcase}{\kw{case}}
\newcommand{\kwdef}{\kw{def}}
\newcommand{\kwsuper}{\kw{super}}
\newcommand{\kwthis}{\kw{this}}
\newcommand{\kwmatch}{\kw{match}}
\newcommand{\kwoverride}{\kw{override}}

\newcommand{\kwobj}{\kw{obj}}
\newcommand{\obj}[1]{C({\overline{#1}})}
\newcommand{\kwself}{\kw{self}}
\newcommand{\kwnew}{\kw{new}}
\newcommand{\kwarr}{\kw{=>}}
\newcommand{\kwint}{\kw{Int}}

\newcommand{\inbracket}[1]{\llbracket#1\rrbracket}
\newcommand{\dt}{\textsc{Dt}}
\newcommand{\ct}{\textsc{It}}
\newcommand{\dtr}[1]{\textsc{Dtr}(#1)}
\newcommand{\consumer}[1]{\textsc{Csm}(#1)}
\newcommand{\case}[2]{\kwcase\ {#1} \Rightarrow {#2}}
\newcommand{\csmcase}[1]{\kwdef \ f(\kwself:D)(\overline{x : T}): T = \overline{\case P{#1}}}
\newcommand{\csm}[1]{\kwdef \ f(\kwself:D)(\overline{x : T}): T = #1}
\newcommand{\dec}[1]{\kwdef \ f(\overline{#1 : T}): T}
\newcommand{\generator}[1]{\textsc{Gen}(#1)}
\newcommand{\toCase}[2]{\textsc{ToCase}(#1,#2)}
\newcommand{\ctrs}[1]{\textsc{Ctr}(#1)}
\newcommand{\dtrToCsm}[2]{\Delta;\Gamma \vdash_D #1 \leadsto #2}
\newcommand{\dtrToCsmCtx}[4]{\Delta;\Gamma, #1 : #2 \vdash_D #3 \leadsto #4}
\newcommand{\dtrToCase}[2]{\Delta;\Gamma \vdash_f {#1} \leadsto {#2}}
\newcommand{\csmToDtr}[2]{\Delta;\Gamma \vdash_C #1 \leadsto #2}
\newcommand{\csmSig}{\textsf{csm2dtrsig}}
\newcommand{\map}[3]{\langle #1 \; | \; #2 \leftarrow #3 \rangle}

\newcommand{\transCtx}[4]{\Delta;\Gamma, #1 \vdash #2 \Rightarrow #3 \leadsto #4}
\newcommand{\trans}[3]{\Delta; \Gamma \vdash #1 \Rightarrow #2 \leadsto #3}
\newcommand{\transGeneric}[4]{#1 \vdash #2 \Rightarrow #3 \leadsto #4}
\newcommand{\aliasCtx}{\Delta}
\newcommand{\toplike}[1]{\rceil #1 \lceil}
\newcommand{\subtype}[2]{#1 <: #2}
\newcommand{\Tau}{\mathrm{T}}
\newcommand{\all}[3]{\forall(#1*#2).#3}
\newcommand{\pr}[1]{\langle #1 \rangle}
\newcommand{\tid}{\kw{id}}
\newcommand{\ttop}{\kw{top}}
\newcommand{\tbot}{\kw{bot}}
\newcommand{\tdist}{\kw{dist}}
\newcommand{\returntype}{\textsc{returntype}}
\newcommand{\distinguish}{\textsc{distinguish}\xspace}
\newcommand{\sortCtx}{\Sigma}
\newcommand{\desugarType}[4]{\sortCtx \vdash_{#1}^{#2} #3 \Rightarrow #4}
\newcommand{\desugarTypeCtx}[5]{#1 \vdash_{#2}^{#3} #4 \Rightarrow #5}
\newcommand{\ctype}{\textsc{fieldType}}
\newcommand{\forward}{\textasciicircum\xspace}
\newcommand{\exclu}{\backslash}

\newcommand{\selDtr}{
\inferrule[Sel2App]
{ \trans {e_1} D {e_1'} \\ f \in \dtr{D} \\ \dtrType(f,D) = \overline{T} \rightarrow T\\ \overline{\trans {e_2} T {e_2'}}}
{ \trans{e_1.f(\overline{e_2})}{T}{f(e_1')(\overline{e_2'})}}
}
\newcommand{\appCsm}{
\inferrule[App2Sel]
{ \trans {e_1} D {e_1'} \\ f \in \consumer{D} \\ \sig(f) = D \rightarrow \overline{T} \rightarrow T \in \Gamma \\ \overline{\trans {e_2} T {e_2'}}}
{ \trans{f(e_1)(\overline{e_2})}{T}{e_1'.f(\overline{e_2'})}}
}
\newcommand{\appCtr}{
\inferrule[Obj2New]
{ \sig(C) = \overline{T} \rightarrow D \\ C \in \ctrs{D} \\ \overline{\trans e T e'}}
{ \trans{C(\overline{e})}{D}{\new C {e'}}}
}
\newcommand{\newGen}{
\inferrule[New2Obj]
{  \sig(C) = \overline{T} \rightarrow D \\ C \in \generator{D} \\ \overline{\trans e T e'}} %\\ {\trans t {\overline{T} \arrow T} {t'}} }
{ \trans{\newSeq C e}{D}{C(\overline{e'})}}
}

\newcommand{\eCongr}{
\inferrule[E-Congr]
{ \eval {e_1} {e_2}}
{ \eval{E[e_1]}{E[e_2]}}
}

% \inferrule[E-Var]
% { \overline{\eval e v}}
% { \eval{\new C e} obj(C,\overline{v})}

\newcommand{\enew}{
\inferrule[E-New]
{ }
{ \eval{\newSeq{C}{v}} {obj(C,\overline{v})}}
}

\newcommand{\ectr}{
\inferrule[E-Ctr]
{ }
{ \eval{C(\overline{v})}{obj(C,\overline{v})}}
}

\newcommand{\edtrInvk}{
\inferrule[E-Dtr]
{ \dtrBody(f,C) = (\overline{y},\overline{x},e)}
{ \eval{obj(C,\overline{v_1}).f(\overline{v_2})}{[\kwthis \mapsto obj(C,\overline{v_1}),\overline{y} \mapsto \overline{v_1}, \overline{x} \mapsto \overline{v_2} ]e}} 
}

\newcommand{\ecsmInvk}{
\inferrule[E-Csm]
{\csmBody(f,C) = (\overline{y},\overline{x},e)}
{ \eval {f(obj(C,\overline{v_1}))(\overline{v_2})} {[\kwself \mapsto obj(C,\overline{v_1}),\overline{y} \mapsto \overline{v_1}, \overline{x} \mapsto \overline{v_2}]e}}
}
\newcommand{\edtrRecv}{
\inferrule[E-Dtr-Recv]
{ \eval{e_1}{e_1'} }
{ \eval{e_1.f(\overline{e})}{e_1'.f(\overline{e})} }
}

\newcommand{\edtrArg}{
\inferrule[E-Dtr-Arg]
{ \eval{e_i}{e_i'} }
{ \eval{v_1.f(\overline{v},e_i, \overline{e})}{v_1.f(\overline{v},e_i,'\overline{e})} }
}

\newcommand{\ecsmDt}{
\inferrule[E-Csm-Dt]
{ \eval {e_1} {e_1'} }
{ \eval {f(e_1')(\overline{e_2})}{f(e_1')(\overline{e_2})}}
}

\newcommand{\ecsmArg}{
\inferrule[E-Csm-Arg]
{ \eval {e_i} {e_i'} }
{ \eval {f(v_1)(\overline{v},e_i,\overline{e})} {f(v_1)(\overline{v},e_i,\overline{e})}}
}
\newcommand{\efun}{
\inferrule[E-Fun]
{ \overline{\eval e v} \\ \body(f) = (\overline{x},e) \\ \eval {[\overline{x} \mapsto \overline{v}]e} v}
{ \eval {f(\overline{e})} v}
}

\newcommand{\dtrGen}{
\inferrule[DtrGen]
{ \defs(C) = \classf C y T D {\overline{Fun}} \\ \fun x {e} \in \overline{Fun} }
{ \dtrBody(f,C) = (\overline{y},\overline{x},e)}
}

\newcommand{\dtrIt}{
\inferrule[DtrIt]
{ \defs(C) = \classf C y T D {\overline{Fun}} \\ \defs(D) = \interface D {Dtr} \\ \fun x e \in \overline{Dtr}}
{ \dtrBody(f,C) = (\varnothing,\overline{x},e)}
}

\newcommand{\csmCtr}{
\inferrule[CsmCtr]
{ \defs(f) = \csmcase e \\ \case {C(\overline{y})}{e}  \in \overline{\case P e}}
{ \csmBody(f,C) = (\overline{y},\overline{x},e) }
}

\newcommand{\csmDt}{
\inferrule[CsmDt]
{ \defs(f) = \csmcase e \\ \case {\_}{e}  \in \overline{\case P e}}
{ \csmBody(f,C) = (\varnothing,\overline{x},e) }
}

% inheritance is not subtyping
%TODO: we do not have casts/subtyping

\section{Formalization}\label{sec:formalization}
In this section, we first formalize the \emph{type-directed} bidirectional transformation between functional and object-oriented decomposition
in the \name calculus.
Our formalization combines ideas from Featherweight Java~\citep{igarashi2001featherweight} and \citet{popl2019}'s work.
We then concretely show how to transform \autoref{setoop} and \autoref{setfp} back and forth using \name.
% \paragraph*{Assumptions}
% For simplicity, we assume that pattern variable names are consistent with 
% those in their corresponding constructor.

\subsection{Syntax}
\autoref{syntax} gives the syntax of \name, which is a minimal calculus capturing essential features for functional and object-oriented
decomposition with a Scala-like syntax. 
To clarify the formalization, we use keywords different from Scala, where
Scala's \lstinline{trait} maps to \kwinterface; \lstinline{sealed trait} maps to \kwdata{}; and \lstinline{case class} maps to \kwcase.

A \name program ($L$) consists of some definitions ($Def$) followed by an expression ($e$).

% Unlike FJ, \name does not allow casts and subtyping, as classes are not used as types.

\begin{figure}[t]
  \begin{displaymath}
    \begin{array}{l}
      \begin{array}{llcl}
        \text{Program}
        & L & \Coloneqq & Def ; L \mid e\\
        \text{Definition}
        & Def & \Coloneqq & Dt \mid It \mid Ctr \mid Gen \mid Csm \\
        \text{Datatype}
         & Dt & \Coloneqq & \sealed D\\
        \text{Interface}
         & It & \Coloneqq & \interface D {Dtr}\\
        \text{Destructor}
         & Dtr & \Coloneqq & Dec \mid Fun\\
        \text{Constructor}
         & Ctr & \Coloneqq & \ctr C D\\
        \text{Generator}
        %  & gen & \Coloneqq & \class C {C(\overline{e})} {fun}\\
         & Gen & \Coloneqq & \class C D {Fun}\\
         \text{Consumer}
         & Csm & \Coloneqq & \csmcase e\\
        %  \text{Body}
        %  & body & \Coloneqq & \overline{P \Rightarrow e} \mid e\\
         \text{Declaration}
         & Dec & \Coloneqq & \dec x\\
         \text{Function}
         & Fun & \Coloneqq & Dec = e\\
        \text{Expression}
         & e & \Coloneqq & x %\mid \overline{(x: T)} \Rightarrow t 
          \mid e_1.f(\overline{e_2})  \mid f(e_1)(\overline{e_2}) \mid C(\overline{e}) \mid \newSeq{C}{e}\\
         \text{Pattern}
         & P & \Coloneqq & C(\overline{x}) \mid \_ \\
        %  \text{Modifier}
        %  & m & \Coloneqq & \kwsealed \mid \kwabstract \mid \kwcase\\
        \text{Types}
        & T & \Coloneqq & D \mid \overline{T} \rightarrow T\\
        & D & \Coloneqq & \text{datatype or interface names} \\
        & C & \Coloneqq & \text{constructor or generator names} \\
        & f & \Coloneqq & \text{consumer or destructor names} \\
        % \text{Names}
        % & D & datatype or interface name \\
      \end{array}
    \end{array}
  \end{displaymath}
  \caption{Syntax of \name}
  \label{syntax}
\end{figure}

% $\overline{f}$ denotes a possibly empty sequence.

\paragraph*{Definitions} A definition can be a datatype ($Dt$), an interface ($It$), a constructor ($Ctr$), a generator ($Gen$), or a consumer ($Csm$).
An $\interface D {Dtr}$ declares some destructors ($Dtr$), which can have an optional default implementation.
A $\class C D {Fun}$ implements the destructors declared by the interface $D$, no more and no less.
An instance of $\sealed D$ can be constructed via its constructors $\ctr C D$.
A consumer $\csmcase e$ has two parameter lists, where the first parameter has a reserved variable name $\kwself$ and must be of a datatype $D$.
The body of a consumer is some named patterns with an optional wildcard pattern in the end.

\paragraph*{Expressions} Metavariable $e$ ranges over expressions. Expressions include variables $x$, method selections $e_1.f(\overline{e_2})$, consumer applications $f(e_1)(\overline{e_2})$, constructor calls $C(\overline{e})$, or new expressions $\new{C}{\overline{e}}$.
We assume that the special variable $\kwthis$ is used only in destructors and the reserved variable $\kwself$ is only used for the datatype argument in consumers.

\paragraph*{Notations}
We write $\overline{\text{overline}}$ as a shorthand for a possibly empty sequence.
Semicolon denotes the concatenation of sequences.
Also, in the rest of the paper, we consider that variables with and without an overline are distinct, e.g. $x$ is distinct from $\overline{x}$. 
% \paragraph*{Types} 
% no subtyping

% We assume that the set of variables includes the special variable this, which cannot be used as the name of an argument to a method
% this must explicitly written for destructor calls on the object itself.

% \paragraph*{Typing}: why no subtyping etc.

\begin{figure}[t]
\begin{tabular}{ll}
\dt & Datatypes\\
\ct & Interfaces\\
$\ctrs{D}$ & Constructors of datatype $D$\\ 
$\dtr{D}$ & Destructors of interface $D$\\  
$\generator{D}$ & Generators of interface $D$\\
$\consumer{D}$ &  Consumers of datatype $D$\\
$\sig(N)$ & Signature of constructor/generator/consumer indexed by name $N$ \\
% $\csmBody(f,C)$ & Case clause for constructor $C$ in consumer $f$\\
% $\dtrBody(f,C)$ & Body of destructor $f$ in generator $C$\\
$\dtrType(f,D)$ & Signature of destructor $f$ for interface $D$\\
$\defs(N)$ & Definitions indexed by name $N$\\
% Caseof & $\caseof f C$ & = & if $\kwdef \ f(x : D)(\overline{x : T}): T = \overline{\kwcase\ C(\overline{x}) \Rightarrow e}\ \land$ \\
\end{tabular}
\caption{Preprocessed global context $\Delta$ for transformation}
\label{gctx}
\end{figure}

% \framebox{$\trans L T L'$}
\begin{figure}
\begin{mathpar}

\inferrule[It2Dt]
{ D \in \ct \\  \overline{\dtrToCsmCtx {\kwthis} {D} {Dtr} {Csm}} \\ \trans L T L'}
{ \trans {\interface{D}{Dtr}; L} T {\sealed{D}; \overline{Csm}; L'}}

\inferrule[Gen2Ctr]
{ D \in \ct \\ \trans L T L'} %  
{ \trans {\class {C} {D} {Fun}; L} T {\ctr{C}{D}}; L'}

\inferrule[Dec2Csm]
{ \overline{C} = \generator{D} \\ \overline{\dtrToCase{\defs(C)} {\case P e}}}
{ \dtrToCsm {\dec x} {\csm \overline{\case P e}}}

\inferrule[Fun2Csm]
{ \dtrToCsm {\dec x} {\csmcase e} \\ \transCtx {\overline{x:T}} e T e'}
{ \dtrToCsm {\dec x = e} \csmcase {e}; \kwcase\; \_ \Rightarrow [\kwthis \mapsto \kwself]e'}

% \framebox{$\dtrToCase {Gen} {\case P e}$}
\inferrule[Fun2Case] % reuse
{ \dec x = e \in \overline{Fun} \\ \transCtx {\overline{y:S}, \overline{x:T}} e T e'}
{  \dtrToCase {\classf {C} {y} {S} {D} {\overline{Fun}}} {\case {C(\overline{y})} {[\kwthis \mapsto \kwself]e'}}}
\end{mathpar}
% $\inbracket{\kwdef \ f(x : D)(\overline{x : T}): T  = \overline{\kwcase\ P \Rightarrow e}} \text{ when } f \in \bigcup\limits_{D_i \in \ct}\consumer{D_i} = \varnothing$ \\

% \framebox{$\trans e T e'$}
\begin{mathpar}
\selDtr

\newGen

\end{mathpar}

\begin{comment}

\inferrule[Var]
{ x : T \in \Gamma}
{ \trans x T x}

\inferrule[Select]
{ \trans {e_1} D {e_1'} \\ \overline{T} \rightarrow T = \dtrType(f,D) \\ \overline{\trans {e_2} T {e_2'}}}
{ \trans{e_1.f(\overline{e_2})}{T}{e_1'.f(\overline{e_2'})}}

\inferrule[Csm]
{\overline{\trans {e} T {e'}} \\ \trans L T L'}
{ \trans {\csmcase e; L} {T}  {\csmcase {e'};L'}}

\inferrule[New]
{ \sig(C) = \overline{T} \rightarrow D \\ \overline{\trans e T e'} } %\\ {\trans t {\overline{T} \arrow T} {t'}} }
{ \trans{\new C {e}}{D}{\new C{e'}}}
\end{comment}

\caption{Core translation rules for object-oriented decomposition}
\label{fig:trans-oo}
\end{figure}

\begin{figure}[t]
% \framebox{$\dtrToCsm {Dtr} {Csm}$}
\begin{mathpar}

\inferrule[Dt2It]
{ D \in \dt \\ \overline{f} = \consumer{D} \\ \overline{\Delta; \Gamma \vdash \defs(f) \leadsto Dtr}  \\ \trans L T L'}
{ \trans {\sealed D; L} T {\kwinterface \; D \{ \overline{Dtr}\} }; L'}
 
\inferrule[Ctr2Gen]
{ D \in \dt \\ \overline{f} = \consumer{D} \\ \overline{\Delta;\Gamma,\overline{x: T} \vdash_C {\defs(f)} \leadsto {Fun}}\\ \trans L T L'}
{ \trans {\ctr{C}{D}; L} T \classf{C}{x}{T}{D}{\overline{Fun}}; L'}
% { \textsf{fresh } x \\ \overline{Gen} = \generator{D} \\  }

\inferrule[CsmElim]
{ D \in \ct \\ \trans L T L'}
{ \trans {\csmcase e; L} {T} L'}
\end{mathpar}

% \framebox{$\Delta; \Gamma \vdash Csm \leadsto Dtr$}
\begin{mathpar}
% \inferrule[Csm2Dec]
% { {\kwcase\; \_ \Rightarrow e} \notin \overline{P}}
% {  \Gamma \vdash \csm P \leadsto \dec }

\inferrule[Csm2Fun]
{ {\kwcase\; \_ \Rightarrow e} \in \overline{\case P e} \\ \transCtx{\kwself: D,\overline{x:T}} e T e'}
{  \Delta; \Gamma \vdash \csmcase e  \leadsto \dec x = [\kwself \mapsto \kwthis]e'}

\inferrule[Csm2Dec]
% { {\kwcase\; \_ \Rightarrow e} \notin \overline{P}}
{}
{  \Delta; \Gamma \vdash \csmcase e \leadsto \dec x}
\end{mathpar}
%% \cristina{Do we need Fun2Case as well? I don't fully follow Case2Fun as opposed to Csm2Fun.}
%% \weixin{Rename Gen2Case as Fun2Case}
% \framebox{$\csmToDtr {Csm} {Fun}$}
\begin{mathpar}
\inferrule[Case2Fun]
{ \case {C(\overline{y})}{e}  \in \overline{\case P e} \\ \transCtx{\kwself: D,\overline{x:T}} e T e'}
{ \csmToDtr { \csmcase e} {\dec x = [\kwself \mapsto \kwthis]e'}}

\appCsm

\inferrule[Obj2New]
{ \sig(C) = \overline{T} \rightarrow D \\ C \in \ctrs{D} \\ \overline{\trans e T e'}}
{ \trans{C(\overline{e})}{D}{\newSeq C {e'}}}
\end{mathpar}
\caption{Core translation rules for functional decomposition}
\label{fig:trans-fp}
\end{figure}

\begin{figure}
\begin{mathpar}

\inferrule[It2It]
{ \overline{\transCtx {\kwthis: D} {Dtr} {T} {Dtr'}} \\ \trans L T L'}
{ \trans {\interface{D}{Dtr}; L} T {\interface{D}{Dtr'}}; L'}

\inferrule[Gen2Gen]
{ \overline{\transCtx{\kwthis:D, \overline{x:T}} {Fun} T {Fun'}}\\ \trans L T L'} %  
{ \trans {\class C D {Fun}; L} T {\class C D {Fun'}}; L'}
%% { \trans {\class C D {Fun}; L} T {\ctr{C}{D}}; L'}
% \inferrule[SubGenerator]
% { \exists E \; s.t. D \in \generator{E}}
% { \trans {\class C {D(\overline{e})} {M}} {OK} {\case C E}}

\inferrule[Sel2Sel]
{ \trans {e_1} D {e_1'} \\ \dtrType(f,D) = \overline{T} \rightarrow T\\ \overline{\trans {e_2} T {e_2'}}}
{ \trans{e_1.f(\overline{e_2})}{T}{e_1'.f(\overline{e_2'})}}

\inferrule[New2New]
{  \sig(C) = \overline{T} \rightarrow D \\ \overline{\trans e T e'}} %\\ {\trans t {\overline{T} \arrow T} {t'}} }
{ \trans{\newSeq C e}{D}{\newSeq C {e'}}}

\end{mathpar}
\caption{Some of the rules for types not selected for transformation}
\label{fig:skip}
\end{figure}

\subsection{Type-directed transformation}

% \paragraph*{Global context}
There is a preprocessing pass for collecting a global context $\Delta$ before the transformation actually happens. The information contained in $\Delta$ is given in \autoref{gctx}.
% Note that, in order to keep the formalization as simple as possible, we overload $\sig$ and $\defs$ to work with different types of arguments.
These definitions are trivial and, due to lack of space, are explained in plain text. 
%More formal defintions can be found in \autoref{sec:additional-formalization}. 
Intuitively, one can think of this preprocessing step as turning some of the
program's syntax into a logical encoding as sets and maps. As we shall see later, having
$\Delta$ precomputed will simplify the subsequent translation rules.

$\trans L T L'$ denotes the type-directed translation, which states that under the context of  $\Delta$ and $\Gamma$, a program $L$ of type $T$
is translated to $L'$.
Core translation rules are split into two sets respectively for object-oriented decomposition (\autoref{fig:trans-oo}) and
functional decomposition (\autoref{fig:trans-fp}).
Auxiliary translation rules may require additional information collected along the way.
For instance, $\vdash_D$ in \textsc{It2Dt} passes the interface name $D$ to the translation rules on destructors.
Similarly, we use $\vdash_f$ in \textsc{Fun2Case} to specialize the generator to case clauses translation for a given destructor $f$.
Lastly, $\vdash_{C}$ translates a specific case clause for constructor $C$ to a destructor in \textsc{Case2Fun}.  

Apart from the core rules, \autoref{fig:skip} gives some of the translation rules for skipping over code that is not meant to be translated
(Complete rules can be found in \autoref{sec:additional-formalization}).
All the rules in the figure work by only transforming the inner expressions. 
This is explained in more detail in \autoref{sec:given-types}.
Our main translation rules are explained by example in \autoref{sec:oo2fp}~and~\autoref{sec:fp2oo}. 
%Besides the core translation rules, we also provide rules that skip over code that the user does not want to translate. 
Notably, our translation rules make use of type information. 
\autoref{sec:why-type} provides concrete examples of why type information is essential. 

%% \section{\name by examples}\label{sec:example}
%% % To better understand the formalized transformation rules, 
%% In this section, we concretely show how to bidirectionally transform the integer sets example discussed in \autoref{sec:overview} step by step
%% using the rules given in \autoref{sec:formalization}.

\subsection{From object-oriented decomposition to functional decomposition}\label{sec:oo2fp}
Recall the code snippet shown in \autoref{setoop}.
After preprocessing, the global context contains:

\begin{tabular}{lcl}
\ct & = & \{ \texttt{Set} \}\\
\dtr{\texttt{Set}} & = & \{ \texttt{isEmpty, contains, insert, union} \}\\
\generator{\texttt{Set}} & = & \{ \texttt{Empty, Insert, Union} \} \\
\end{tabular}

The translation starts from the \lstinline{Set} interface. 
Since \lstinline{Set} is in \ct, it is translated to a datatype together with some consumers by the rule \textsc{It2Dt}.
% \textsc{It2Dt} transforms an interface into a datatype definition and its destructors into consumers.
As indicated by \dtr{\texttt{Set}}, \lstinline{isEmpty}, \lstinline{contains}, \lstinline{insert} and \lstinline{union} are destructors of \lstinline{Set}, 
which are translated to consumers under the term context that \lstinline{this} is of type \lstinline{Set}.
These destructors are treated differently depending on whether they have a default implementation. 
\lstinline{isEmpty} and \lstinline{contains} are declarations and hence are dealt by \textsc{Dec2Csm}. 
For instance, \lstinline{isEmpty} is augmented with the variable \lstinline{self} of type \lstinline{Set} and
its body is represented by case clauses collected from generators of \lstinline{Set}.
\textsc{Fun2Case} is applied to every generator of \generator{Set} (i.e. \lstinline{Empty}, \lstinline{Insert}, and \lstinline{Union}). 
\textsc{Fun2Case} finds out the definition of \lstinline{isEmpty}, translates the body and wraps it into a case clause. 
Particularly interesting is the \lstinline{Union} class, where \lstinline{isEmpty} is recursively called on its components.
With \lstinline{isEmpty} being a consumer afterwards, destructor selections on \lstinline{isEmpty} (e.g. \lstinline{s1.isEmpty}) should be translated into consumer calls (e.g. \lstinline{isEmpty(s1)}).
This is done by the rule \textsc{Sel2App} in a type-directed manner, which makes sure that \lstinline{s1} is of type \lstinline{Set} and \lstinline{isEmpty} is in \dtr{\lstinline{Set}}.
Therefore, the case clause for \lstinline{Union} is \lstinline{case Union(s1,s2) => isEmpty(s1) && isEmpty(s2)}.

On the other hand, destructors with a default implementation like \lstinline{insert} and \lstinline{union} are dealt by \textsc{Fun2Csm}. 
Building upon \textsc{Dec2Csm}, \textsc{Fun2Csm} additionally appends a wildcard clause constructed from the default implementation to the case clauses collected from generators.
Let us take a closer look at the \lstinline{insert} defined inside \lstinline{Set}. It calls \lstinline{new} on \lstinline{Insert}, 
which is in \generator{\lstinline{Set}} and will be a constructor after translation. Thus, \lstinline{new Insert(this,i)} is translated to
\lstinline{Insert(this,i)} by the rule \textsc{New2Ctr}. Also, the \lstinline{this} variable is substituted by the variable \lstinline{self}.

The translation then goes to generator definitions, \lstinline{Empty}, \lstinline{Insert}, and \lstinline{Union}. 
They are simply translated to constructors with their body dropped by the rule \textsc{Gen2Ctr}. 

Finally, the expression is translated, where \textsc{SelDtr} is repeatedly applied to convert destructor selections to consumer applications.

\subsection{From functional decomposition to object-oriented decomposition}\label{sec:fp2oo}
Preprocessing \autoref{setfp} gives us the following global context:

\begin{tabular}{lcl}
\dt & = & \{ \texttt{Set} \}\\
\ctrs{\texttt{Set}} & = & \{ \texttt{Empty, Insert, Union} \} \\
\consumer{\texttt{Set}} & = & \{ \texttt{isEmpty, contains, insert, union} \}\\
% \textsc{It} \textsc{Dtr} \textsc{Gen}& = & $\varnothing$\\
\end{tabular}

Now that \lstinline{Set} is a datatype in \dt, the rule \textsc{Dt2It} is applied, 
which translates a datatype into an interface. 
Consumers of \lstinline{Set}, revealed by \consumer{\texttt{Set}}, are translated to destructors
using either the rule \textsc{Csm2Dec} or \textsc{Csm2Fun} depending on whether they contain a wildcard pattern.
If a wildcard pattern is not contained (e.g. \lstinline{isEmpty} and \lstinline{contains}), the rule \textsc{Csm2Dec} is applied,
which returns the signature of the consumer with first parameter list (\lstinline{self: Set}) dropped. 
If a wildcard pattern exists\footnote{If the body of a consumer is an ordinary expression $e$, it is viewed as a syntactic sugar of $\kwcase\; \_ \Rightarrow e$} (e.g. \lstinline{insert} and \lstinline{union}), 
a default implementation is constructed from the right hand side of the wildcard pattern. 
The translation on \lstinline{insert} deserves some explanation.
Since \lstinline{Set} will become an interface after translation, 
consumer applications on \lstinline{Set} (e.g. \lstinline{contains(self)(i)}) are rewritten as destructor selections \lstinline{set.contains(i)} by the rule \textsc{App2Sel}. 
Also, constructor applications on generators of \lstinline{Set} (e.g. \lstinline{Insert(self,i)}) are replaced by new expressions (e.g. \lstinline{new Insert(self,i)}) using the rule \textsc{Obj2New}.
Moreover, references to the \lstinline{self} variable are substituted as \lstinline{this}.

The \textsc{Ctr2Gen} rule is then applied to every constructor of \lstinline{Set}, namely \lstinline{Empty}, \lstinline{Insert}, and \lstinline{Union}, for producing a generator.
For the case of \lstinline{Union}, the translation walks through the consumers of \lstinline{Set}. 
If there exists a case clause for \lstinline{Union}, a destructor will be produced by the rule \textsc{Case2Fun}.

Next goes to the consumer definitions.  Since \lstinline{insert}, \lstinline{isEmpty}, \lstinline{contains}, and \lstinline{union} are 
all in \consumer{\lstinline{Set}}, they are eliminated by the rule \textsc{CsmElim}.

Finally, \textsc{App2Sel} repeatedly rewrites consumer applications to destructor selections for the last expression.

% \lstinline{Set} are \lstinline{Empty}, \lstinline{Insert}, and \lstinline{Union}.

\subsection{Transforming selected types} \label{sec:given-types}
The programs discussed so far contain only one datatype or one interface.
However, multiple datatypes and interfaces may coexist in a complicated program.
For such programs, a user may want to select certain algebraic datatypes or interfaces rather than all of them for transformation.
Our transformation is also applicable in this scenario. The only additional step required is modifying the global context according to the selected types.

Let $\overline{D}$ be the types selected for transformation, then for every type $D$ in the program we do:
\[
D \notin \overline{D} \land D \in \dt \Rightarrow
\begin{cases}
\dt  = \dt / D\\ 
\ctrs{D} = \varnothing \\ 
\consumer{D} = \varnothing \\ 
\end{cases} \quad\quad
D \notin \overline{D} \land D \in \ct \Rightarrow
\begin{cases}
\ct  =  \ct / D\\ 
\generator{D} = \varnothing \\ 
\dtr{D} = \varnothing \\ 
\end{cases}
\]

That is, if a datatype $D \in \dt$ (interface $D\in\ct$) is not in the set of types selected for transformation $\overline{D}$,
then we remove it from \dt{} (\ct). We also set the corresponding
set of constructors and consumers (generators and destructors) to $\varnothing$.
Basically, \dt{} or \ct~ will only contain the types selected for transformation. 
Some of the rules for skipping over code that is not meant to be translated (by only translating the inner expressions) are given in \autoref{fig:skip}. % in \autoref{sec:additional-formalization}.

\paragraph*{Transforming selected interfaces} 
To see concretely how the transformation works only on selected types, suppose the following class hierarchy for integer lists is defined together with \autoref{setoop}:

\lstinputlisting[linerange=28-34]{./app/src/main/scala/SetListOOP.scala}%APPLY:LIST_OOP
We would only like to transform \lstinline{Set}.
Since \lstinline{List} is not one of the selected types, 
it is removed from \ct{} and \generator{\lstinline{List}}/\dtr{\lstinline{List}} are deleted,
resulting in a $\Delta$ similar to that in \autoref{sec:oo2fp}.
Consequently, different sets of rules will be applied to the \lstinline{Set} and \lstinline{List} hierarchies.
The rule \textsc{It2It} is applied for \lstinline{List} and \textsc{Gen2Gen} is applied to \lstinline{Nil} and \lstinline{Cons}, which translate their inner expressions only. 
Moreover, destructor selections on \lstinline{List} objects and new on \lstinline{Cons} are preserved by the rule \textsc{Sel2Sel} and \textsc{New2New}. Note that the translation rules 
\textsc{It2It}, \textsc{Gen2Gen}, \textsc{Sel2Sel} and \textsc{New2New} can be found in \autoref{fig:skip}. % in \autoref{sec:additional-formalization}.

\paragraph*{Transforming selected datatypes} 
Oppositely, suppose \lstinline{List} is implemented in a functional way together with sets shown in \autoref{setfp}:
\lstinputlisting[linerange=28-30]{./app/src/main/scala/SetListFP.scala}%APPLY:LIST_FP
And \lstinline{Set} is the only candidate for transformation.
Then we get a $\Delta$ similar to that in \autoref{sec:fp2oo}.
Unlike \lstinline{Set}, \lstinline{List} is processed by the rule \textsc{Dt2Dt} since it is 
not in \dt{}. Therefore, \lstinline{List} is still a datatype after translation. 
\lstinline{Nil} and \lstinline{Cons} are also unchanged by the rule \textsc{Ctr2Ctr} and 
\lstinline{contains} remains a consumer on \lstinline{List} by the rule \textsc{Csm2Csm}.
Moreover, constructor calls on \lstinline{Cons} and application of \lstinline{contains} on \lstinline{List}s are
retained by the rule \textsc{Obj2Obj} and \textsc{App2App}.

\subsection{Why the transformation should be type-directed}\label{sec:why-type}
Importantly, the \lstinline{List} example illustrates why the transformation should be type-directed.
Notice that \lstinline{contains} is defined on both \lstinline{Set} and \lstinline{List}.
For the case when \lstinline{contains} is a destructor, 
expressions of the form \lstinline{x.contains(i)} should be transformed differently according to the type of \lstinline{x}.
If the type of \lstinline{x} is \lstinline{Set} which is in \ct, the rule \textsc{Sel2App} is applied, which transforms the expression to \lstinline{contains(x)(i)}. 
Otherwise, the rule \textsc{Sel2Sel} is applied and \lstinline{x.contains(i)} will be the same after transformation.  
Likewise, for the case when \lstinline{contains} is an overloaded consumer on both \lstinline{Set} and \lstinline{List}, 
expressions of the form \lstinline{contains(x)(i)} should be distinguished by the type of \lstinline{x}.
When \lstinline{x} is of type \lstinline{Set}, the rule \textsc{App2Sel} transforms the expression to \lstinline{x.contains(i)}  
due to the fact that \lstinline{Set} is in \dt. In contrast, the rule \textsc{App2App} gives back \lstinline{contains(x)(i)} when \lstinline{x} 
is a \lstinline{List} object. 

However, without being guided by types, a syntax-directed approach such as the one 
used by ~\citet{popl2019} would transform expressions of the same form uniformly, resulting in an erroneous program after transformation.

% overloading

% Collection example
% $$
% \begin{array}{lcl}
% \dt' & = & \ct \\ 
% \ct' & = & \dt \\ 
% \generator{D}' & = & \ctrs{D} \\ 
% \ctrs{D}' & = & \generator{D} \\ 
% \dtr{D}' & = & \consumer{D} \\ 
% \consumer{D}' & = & \dtr{D} \\ 
% \end{array}
% $$

\section{Soundness results} \label{sec:soundness}
In this section, we discuss our main soundness results. 
We start by giving 
%necessary machinery for our soundness proofs. In particular,
%we discuss well-formedness (Section~\ref{sec:well-formedness}) and
the operational semantics of \name in ~\autoref{sec:big-step}.
Then, we proceed to provide
the soundness theorems in~\autoref{sec:metatheory}.

%% \subsection{Well-formedness}\label{sec:well-formedness}

%% The well-formedness judgment checks that: 1) variables are in scope;
%% 2) no additional methods are defined on classes.
%% 3) patterns are exhaustive.

% nominal typing
\begin{figure}[t]
\begin{displaymath}
\begin{array}{lcl}

v & ::= & obj(C,\overline{v})

%v & ::= & \obj{v} \mid \newSeq{C}{v}

% v,u & ::= & \new{C}{v} \mid C(\overline{v})
\end{array}
\end{displaymath}

\framebox{Evaluation contexts}
\begin{displaymath}
\begin{array}{lcl}
E & ::= & \square \mid C(\overline{v},\square,\overline{e}) \mid \new{C}{\overline{v},\square,\overline{e}} \mid f(\square)(\overline{e}) \mid f(v)(\overline{v},\square,\overline{e}) \mid \square.f(\overline{e}) \mid v.f(\overline{v},\square,\overline{e})
\end{array}
\end{displaymath}

\framebox{$\eval e e'$}
\begin{mathpar}

  \eCongr

  \ectr

  \enew

  \ecsmInvk

\edtrInvk
%
  % \edtrRecv
%
  % \edtrArg
%
  % \ecsmDt
%
  % \ecsmArg
\end{mathpar}
\caption{Small-step semantics of \name}
\label{semantics}
% TODO: small-step semantics and type-soundness proof

%Evaluation contexts

% FJ: if a term is well typed and it reduces to a normal form, then it is either a value or an expression that gets stuck at a downcast.

% Preservation: if Gamma |- e: C and e -> e', then Gamma |- e': C.

% Progress: Suppose e is a well-typed expression
% (1) if e includes new C(e) as a 

% \framebox{$\eval e {e'}$}
% \begin{mathpar}
% \inferrule[E-InvkNew]
% { \dtrBody(f,C(\overline{v})) = (\overline{x},e) }
% { \eval{(\new{C}{v}).f(\overline{u})}{[\overline{x} \mapsto \overline{u},\kwthis \mapsto \new{C}{v}]e}}

% \inferrule[E-Csm]
% { \csmBody(f,C(\overline{v})) = (\overline{x},e) }
% { \eval {f(C(\overline{v}))(\overline{u})} {[\overline{x} \mapsto \overline{u}]e}}

% \inferrule[E-Fun]
% { \body(f) = (\overline{x},e) }
% { \eval {f(\overline{v})} {[\overline{x} \mapsto \overline{v}]e}}
% \end{mathpar}
% \caption{Small-semantics of \name}
% \label{semantics}
\end{figure}

\begin{figure}[t]
\begin{mathpar}
\framebox{$\dtrBody(f,C) = (\overline{y},\overline{x},e)$}\\
\dtrGen

\dtrIt
\end{mathpar}

\framebox{$\csmBody(f,C) = (\overline{y},\overline{x},e)$}\\

\begin{mathpar}
\csmCtr

\csmDt

% \framebox{$\body(f) = (\overline{x},e)$}\\
% \inferrule*
% {\defs(f) = \fun e}
% { \body(f) = (\overline{x},e)}
\end{mathpar}
\caption{Destructor/consumer lookup}
\label{fig:compute-delta}
\end{figure}
%\weixin{Should it be small-step or big-step for the proof purpose?}

\subsection{Small-step semantics}\label{sec:big-step}

We introduce a call-by-value small-step semantics for \name, which will be used when proving the soundness of the
translation rules. The rules for our evaluation relation
$\rightarrow$ are given in \autoref{semantics}.

%next subexpression to be reduced.
An evaluation context is an expression with a hole $\square$, where the hole denotes the next expression to be evaluated.
$E[e]$ replaces the hole with the expression.
We assume that $obj(C,\overline{v})$ denotes an object created from a generator (\textsc{E-New}) or a constructor (\textsc{E-Ctr}) $C$, where
$\overline{v}$ are the values of the corresponding fields.
Rule \textsc{E-Congr} applies one evaluation step to expression $e_1$ in the evaluation context $E[e_1]$, thus evaluating $E[e_1]$ to $E[e_2]$.
For the \textsc{E-Dtr} rule, we bound the receiver object 
to \texttt{this}, and all the fields of $f$ to their corresponding values, in order to evaluate its body $e$.
The rule \textsc{E-Csm} is very similar to \textsc{E-Dtr} with the difference that,
instead of binding the object $obj(C,\overline{v})$ to \texttt{this},
we bind it to the datatype argument $\kwself$.
Note that before \textsc{E-Dtr} and \textsc{E-Csm} apply, the repeated application of \textsc{E-Congr} 
evaluated all the sub-expressions to values.
\textsc{E-Dtr} and \textsc{E-Csm} rely on two auxiliary definitions given in \autoref{fig:compute-delta}.
For $\dtrBody$, we compute a tuple containing the class fields, destructor arguments and destructor body by looking up destructor $f$ defined on generator $C$.
Similarly, for $\csmBody$ the elements of the resulting tuple are: constructor field names, parameter names in the second list, and the right-hand side of the case clause $C$ on consumer $f$. We assume the usual rules for substitution.

%\autoref{semantics} gives the call-by-value big-step semantics of \name.

\subsection{Soundness theorems} \label{sec:metatheory}
% well-typedness of the derived expression Correctness of transformation
%\cristina{Added some wording.}

Theorems~\ref{theorem:preservation} and~\ref{theorem:progress} state the type safety of \name.
Then, \autoref{theorem:tp} and \autoref{sp} state the soundness of our translation. In particular, \autoref{theorem:tp} captures the type and syntax preservation of the translation, and \autoref{sp} states the preservation of semantics.
Additional lemmas and proofs are left in \autoref{sec:proofs}.  We also have a discussion on well-formedness.

\paragraph{Type safety of \name} Given that our
typing rules are integrated with the type-directed translation rules in Figure~\ref{fig:trans-oo} and Figure~\ref{fig:trans-fp},
the type safety theorems ignore the result of the actual translation (by using $\_$), and only focus on type derivations. 

\begin{theorem}[Preservation]\label{theorem:preservation} 
  Suppose $e$ is a well-typed expression.
  if $\Delta; \boldsymbol{\cdot} \vdash e : T \leadsto \_$ and $\eval e {e'}$, then $\Delta; \boldsymbol{\cdot} \vdash e' : T \leadsto \_$.
\end{theorem}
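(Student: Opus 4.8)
The plan is to prove Preservation by a standard induction on the derivation of the small-step reduction $\eval{e}{e'}$, following the Featherweight Java playbook adapted to the two-sided (object-oriented and functional) setting of \name. The overall strategy is: assume $\Delta;\boldsymbol{\cdot}\vdash e:T\leadsto\_$, case-split on the last reduction rule used to derive $\eval{e}{e'}$, and in each case reconstruct a typing derivation $\Delta;\boldsymbol{\cdot}\vdash e':T\leadsto\_$ of the \emph{same} type $T$. Because the typing judgment is fused with the translation judgment $\trans{\cdot}{\cdot}{\cdot}$, I will only track the type component and discard the translated term via the $\_$ placeholder, exactly as the theorem statement permits.

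First I would dispatch the congruence case \textsc{E-Congr}, where $e=E[e_1]$ reduces to $E[e_2]$ via $\eval{e_1}{e_2}$. Here I would need an inversion/decomposition lemma stating that if $E[e_1]$ is well-typed then its redex subexpression $e_1$ has some type $S$ in the appropriate (here empty) context, and that plugging any $e_2$ of type $S$ back into $E$ recovers a term of type $T$. Given such a lemma, the induction hypothesis gives $\Delta;\boldsymbol{\cdot}\vdash e_2:S\leadsto\_$, and the replacement half of the lemma closes the case. The two value-introduction cases \textsc{E-Ctr} and \textsc{E-New} reduce $C(\overline{v})$ and $\newSeq{C}{v}$ to $obj(C,\overline{v})$; these are essentially immediate once I have fixed a typing rule for $obj$-values that mirrors the typing of constructors and generators, so that the resulting value inhabits the same datatype/interface $D=T$ that the original constructor or generator call had by \textsc{Obj2New}/\textsc{New2Obj}.

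The heart of the proof is the two $\beta$-like cases, \textsc{E-Dtr} and \textsc{E-Csm}, which perform the actual method/consumer invocation by substituting the receiver and arguments into the body. For \textsc{E-Dtr}, inverting the typing of $obj(C,\overline{v_1}).f(\overline{v_2})$ tells me the declared signature $\dtrType(f,D)=\overline{T}\rightarrow T$, so the arguments $\overline{v_2}$ have types $\overline{T}$ and the receiver has the interface type $D$; then $\dtrBody(f,C)=(\overline{y},\overline{x},e)$ via \textsc{DtrGen}/\textsc{DtrIt} must yield a body $e$ that is well-typed under the context binding $\kwthis:D$, the fields $\overline{y}$ to the generator's field types, and $\overline{x}:\overline{T}$ to $T$. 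The \textsc{E-Csm} case is symmetric, using $\csmBody$, \textsc{CsmCtr}/\textsc{CsmDt}, and the consumer signature $\sig(f)=D\rightarrow\overline{T}\rightarrow T$. In both cases the crucial ingredient is a Substitution Lemma: if a body is well-typed in a context $\Gamma$ and we supply values matching $\Gamma$'s bindings (receiver, fields, arguments), the simultaneous substitution is well-typed at the body's type $T$. This gives $\Delta;\boldsymbol{\cdot}\vdash e':T\leadsto\_$.

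The main obstacle I expect is precisely this Substitution Lemma together with the correctness of the body-lookup functions $\dtrBody$ and $\csmBody$: I must verify that the field/argument types recorded at \emph{declaration} time in $\Delta$ genuinely match the types of the values stored in $obj(C,\overline{v_1})$ at reduction time, i.e. that objects are always built with field values of the declared field types. This well-formedness invariant on $obj$-values (which the paper signals it will discuss) is what makes inversion on \textsc{E-Dtr}/\textsc{E-Csm} line up with the declaration-side typing of the body. A secondary subtlety is that, since typing and translation are entangled, I should confirm that the same set of rules (object-oriented vs.\ functional, per the $\Delta$-driven membership tests $D\in\ct$ / $D\in\dt$) applies to both $e$ and $e'$; but because reduction never changes the global context $\Delta$ and never rewrites a datatype/interface into the other, the rule-selection side conditions are preserved automatically, so this does not cause real trouble. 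Everything else is routine structural bookkeeping.
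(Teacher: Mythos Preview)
Your proposal is correct but structures the induction differently from the paper. You induct on the reduction derivation $\eval{e}{e'}$ and factor the congruence case through a separate decomposition/replacement lemma for evaluation contexts; the paper instead inducts on the \emph{typing} derivation (cases \textsc{Sel2App}, \textsc{App2Sel}, \textsc{Obj2New}, \textsc{New2Obj}) and, within each case, branches on whether \textsc{E-Congr} or the corresponding computation rule (\textsc{E-Dtr}, \textsc{E-Csm}, \textsc{E-Ctr}, \textsc{E-New}) fires, applying the induction hypothesis directly to the typed subexpressions in the congruence branch. Both routes rely on the same auxiliary ingredients you identified: a typing rule for $obj(C,\overline{v})$ values (the paper introduces an \textsc{Obj} rule inline), correctness of $\dtrBody$/$\csmBody$ lookup, and a substitution lemma for the $\beta$-like cases. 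Your organization is the more standard Wright--Felleisen one and buys a clean one-shot treatment of all congruence positions via the context lemma; the paper's FJ-style organization avoids stating that lemma explicitly at the cost of repeating the congruence argument per typing rule.
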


\begin{theorem}[Progress]\label{theorem:progress} 
  If $\Delta; \boldsymbol{\cdot} \vdash e : T \leadsto \_$,
  %is a well-typed expression, 
  % if $e$ includes $\new C e.f(d)$ as a subexpression, then dtrBody(m,C) = (ys,xs,e0) and 
  then either $\eval e {e'}$ or $e$ is a value.
  % $\trans e T e'$ and $e \$, then $\trans {e_1} T {e''}$
\end{theorem}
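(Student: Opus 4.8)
The plan is to proceed by induction on the typing derivation $\Delta; \boldsymbol{\cdot} \vdash e : T \leadsto \_$, following the standard structure of a progress argument for a small-step semantics. Since $e$ is typed in the empty term context, the variable case cannot occur, so I would dispatch the remaining expression forms one at a time: constructor calls $C(\overline{e})$, new expressions $\newSeq{C}{e}$, consumer applications $f(e_1)(\overline{e_2})$, and method selections $e_1.f(\overline{e_2})$. For each, the inductive hypothesis tells me that every immediate subexpression is either already a value or can take an evaluation step.

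First I would handle the case where some subexpression is reducible. Here the evaluation contexts $E$ defined in \autoref{semantics} do the work: each expression form has a corresponding context frame (e.g. $f(\square)(\overline{e})$, $v.f(\overline{v},\square,\overline{e})$, $C(\overline{v},\square,\overline{e})$) that isolates the leftmost non-value position, so a single application of \textsc{E-Congr} to the reducible subexpression lets the whole expression step. The bookkeeping here is routine: I need only check that the context grammar provides a frame for every position in which a subexpression can sit, which it does by construction. The interesting cases arise when \emph{all} subexpressions are already values.

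When all subexpressions are values, I must produce a reduction step from a head-reduction rule. For $C(\overline{v})$ and $\newSeq{C}{v}$ the rules \textsc{E-Ctr} and \textsc{E-New} fire unconditionally, yielding $obj(C,\overline{v})$, so these are immediate. The substantive cases are $f(obj(C,\overline{v_1}))(\overline{v_2})$ and $obj(C,\overline{v_1}).f(\overline{v_2})$, which must step by \textsc{E-Csm} and \textsc{E-Dtr} respectively. The key obligation is that the lookup functions $\csmBody(f,C)$ and $\dtrBody(f,C)$ are in fact \emph{defined}: I must argue that whenever $f$ is a consumer on the datatype $D$ and $obj(C,\overline{v_1})$ has type $D$, the constructor $C$ is one of $\ctrs{D}$ and the consumer $f$ supplies a matching case clause (either a named case for $C$ via \textsc{CsmCtr} or a wildcard via \textsc{CsmDt}), and dually for destructors via \textsc{DtrGen} and \textsc{DtrIt}. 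This is a canonical-forms argument: I expect to need a lemma stating that a well-typed closed value of type $D$ has the shape $obj(C,\overline{v})$ with $C$ a constructor/generator of $D$, together with the well-formedness condition (alluded to in the discussion following the theorems) guaranteeing that every consumer covers all constructors and every generator implements all declared destructors.

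The main obstacle, then, is not the congruence plumbing but establishing that the destructor/consumer lookups never get stuck. This is exactly where well-formedness of the program is essential: without the guarantee that a generator implements precisely the destructors its interface declares (and that a consumer's case clauses, possibly with a wildcard, cover every constructor), the head-reduction rules could fail to match and progress would break. I would therefore isolate this as the crucial canonical-forms-plus-exhaustiveness lemma and lean on it in the \textsc{E-Csm} and \textsc{E-Dtr} cases; the rest of the proof is then a mechanical traversal of the evaluation-context frames.
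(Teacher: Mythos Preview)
Your proposal is correct and follows essentially the same approach as the paper: induction on the typing derivation, with \textsc{E-Congr} handling the case of a reducible subexpression and the head rules \textsc{E-Ctr}, \textsc{E-New}, \textsc{E-Csm}, \textsc{E-Dtr} handling the all-values cases. You are in fact more careful than the paper, which silently assumes that $\dtrBody(f,C)$ and $\csmBody(f,C)$ are defined without isolating the canonical-forms and exhaustiveness obligations you identify.
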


\paragraph{Soundness of the translation}
%We now state our main soundness theorems.
\autoref{theorem:tp} states that,
if a well-typed program $P$ is translated to $P'$ under the typing environment $\Gamma$ and
global context $\Delta$, then $P'$ can be translated back to $P$ under the same
typing environment and the translated global context $\Delta'$. Intuitively,
transforming a \name program twice will give us back the original program, thus being
syntax preserving. Also, $P'$ has the same type $T$ as $P$ under the typing environment $\Gamma$ and global context $\Delta'$, thus being type preserving too.
%% \begin{theorem}[Well-formedness preservation]\label{theorem:wfp} if $\Gamma \vdash P$ and $\trans P T P'$, then $\Gamma \vdash P'$.
%% \end{theorem}

% \begin{theorem}[Soundness]\label{theorem:soundness} 
%   If $\trans e T e'$ and $\eval e {e''}$, then $e''$ is a value.
% \end{theorem}

\begin{theorem}[Transformation Preservation]\label{theorem:tp} if $\trans P T P'$ and $\Delta \leadsto \Delta'$, then $\transGeneric {\Delta'; \Gamma} {P'} T P$. 
\end{theorem}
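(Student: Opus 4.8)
The plan is to proceed by structural induction on the derivation of $\trans P T P'$, performing a case analysis on the final rule applied and exploiting the fact that the core translation rules come in dual pairs. The rules in \autoref{fig:trans-oo} and \autoref{fig:trans-fp} are organised so that every forward rule has an inverse counterpart whose premises become available precisely once the global context is flipped by $\Delta \leadsto \Delta'$: \textsc{It2Dt} is inverted by \textsc{Dt2It}, \textsc{Gen2Ctr} by \textsc{Ctr2Gen}, \textsc{Sel2App} by \textsc{App2Sel}, \textsc{Obj2New} by \textsc{New2Obj}, and the skip rules of \autoref{fig:skip} (e.g. \textsc{Sel2Sel}, \textsc{New2New}) are self-dual. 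The crucial point is that $\Delta \leadsto \Delta'$ moves each selected interface $D$ from $\ct$ into $\dt$ (turning its destructors into consumers and its generators into constructors), and symmetrically for selected datatypes, so that under $\Delta'$ exactly the dual rule becomes applicable to $P'$ while non-selected types keep their classification and hence keep taking the skip rules.

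First I would establish an expression-level round-trip lemma: if $\trans e T e'$ then $\transGeneric{\Delta'; \Gamma}{e'}{T}{e}$. This is proved by induction on the expression and is where the type-directedness is indispensable—the type $T$ recorded at each node of the forward derivation, together with the classification of the relevant type name in $\Delta'$, uniquely determines which backward rule fires (cf. \autoref{sec:why-type}). Reusing these recorded types, a selection on a selected interface that became a consumer application reverts via \textsc{App2Sel}, a constructor call that became a \kwnew{} expression reverts via \textsc{New2Obj}, and sub-expressions over non-selected types are reproduced verbatim by the self-dual skip rules; in each case the recovered expression is syntactically identical to the original.

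Next I would prove the two structural correspondences that underlie the definition rules. For the destructor/consumer pair I would show that gathering, via \textsc{It2Dt} together with \textsc{Dec2Csm}/\textsc{Fun2Csm} and \textsc{Fun2Case}, the body of each destructor $f$ across all generators of $D$ into a single consumer, and then scattering it back via \textsc{Dt2It} with \textsc{Csm2Dec}/\textsc{Csm2Fun} and \textsc{Case2Fun}, recovers each original destructor definition. This requires tracking the substitutions $\kwthis \mapsto \kwself$ introduced in the forward direction and $\kwself \mapsto \kwthis$ introduced in the backward direction, and checking that they compose to the identity, together with the observation that a destructor's default implementation corresponds exactly to the wildcard clause of the resulting consumer (and conversely). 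The generator/constructor correspondence (\textsc{Gen2Ctr} versus \textsc{Ctr2Gen}) is analogous but relies on the fact that although \textsc{Gen2Ctr} discards the function bodies, they survive in the consumers and are re-associated to the correct constructor by \textsc{Case2Fun} when the generator is reconstructed.

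With these lemmas in hand the main induction is routine: in each case the last forward rule is matched by its dual backward rule instantiated under $\Delta'$, the inner expressions are handled by the expression round-trip lemma, the definition bodies by the two structural correspondences, and the tail program $L$ by the inductive hypothesis. The step I expect to be the main obstacle is the destructor/consumer correspondence, because the forward translation genuinely redistributes information—collapsing one destructor scattered across many generators into one consumer carrying many case clauses, and folding default implementations into wildcard patterns. Verifying that this redistribution is undone \emph{exactly}, including all the $\kwthis$/$\kwself$ substitution bookkeeping and the precise alignment of each case clause with the constructor it came from, is the delicate part of the argument.
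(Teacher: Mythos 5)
Your proposal takes essentially the same approach as the paper's proof: induction on the translation derivation, matching each forward rule with its dual under the flipped context $\Delta'$, justified by the constructor/generator and destructor/consumer context lemmas and the inductive hypothesis on sub-expressions. If anything you are more thorough than the paper, whose written proof spells out only the four expression-level cases (\textsc{Obj2New}, \textsc{New2Obj}, \textsc{Sel2App}, \textsc{App2Sel}) and leaves implicit both the definition-level round trips (\textsc{It2Dt}/\textsc{Dt2It}, \textsc{Gen2Ctr}/\textsc{Ctr2Gen}) and the composition of the $[\kwthis \mapsto \kwself]$ and $[\kwself \mapsto \kwthis]$ substitutions that you rightly flag as the delicate part.
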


%% \todo{We still don't handle the non-termination case. The progress theorem shows that programs in \name can't get stuck, although they can still be non-terminating.
%% We can think about saying something about the fact that if a program P is non-terminating, it will also be non-terminating after translation.}

\autoref{sp} states that if the evaluation of $P$ terminates with result $v$, then the
evaluation of $P'$ obtained through the translation also terminates with result $v$. Alternatively, if the execution of $P$ doesn't terminate,
neither does the execution of $P'$.

\begin{theorem}[Semantics preservation]\label{sp} Given $\trans P T P'$, the following two conditions must hold: (1) if $ P \longrightarrow^* v$, 
  then ${P'} \longrightarrow^* v$, and (2) if $P$ diverges, then so does $P'$.
\end{theorem}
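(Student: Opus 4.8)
The plan is to prove semantics preservation by induction on the structure of the evaluation derivation $P \longrightarrow^* v$, leveraging the small-step semantics from \autoref{semantics} and establishing a tight simulation between the reduction steps of $P$ and those of its translation $P'$. The key observation is that the translation rules in \autoref{fig:trans-oo} and \autoref{fig:trans-fp} are designed to be \emph{value-preserving}: since values are exactly of the form $obj(C,\overline{v})$, and the translation maps constructor calls $C(\overline{e})$ to new expressions $\newSeq{C}{e'}$ (and vice versa) while leaving the constructor name $C$ untouched, both $\enew$ and $\ectr$ reduce to the \emph{same} canonical value $obj(C,\overline{v})$. This is the linchpin that lets a common result $v$ appear on both sides.

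First I would establish a \textbf{single-step simulation lemma}: if $\trans P T P'$ and $P \longrightarrow Q$, then there exists $Q'$ with $P' \longrightarrow^* Q'$ and $\trans Q T Q'$ (preservation of the translation relation under reduction). The heart of this lemma is a case analysis on the reduction rule applied to $P$. For the congruence rule $\eCongr$, the translation rules recurse structurally into subexpressions (as the skip rules in \autoref{fig:skip} make explicit), so the evaluation context $E$ is mirrored by a translated context $E'$ and the result follows from the induction hypothesis. The interesting cases are the redex-reducing rules. For $\edtrInvk$ (a destructor invocation on the OO side), the corresponding translated expression is a consumer application reduced by $\ecsmInvk$; I must show that $\dtrBody(f,C)$ and $\csmBody(f,C)$ yield the \emph{same} triple $(\overline{y},\overline{x},e)$ up to the substitution $[\kwthis \mapsto \kwself]$ that the translation rules $\textsc{Fun2Case}$, $\textsc{Case2Fun}$, $\textsc{Fun2Csm}$, and $\textsc{Csm2Fun}$ consistently insert. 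Because substitution commutes with this renaming and the bodies are translated by the same inner-expression rules, the post-substitution terms on the two sides are again related by the translation, which is precisely the invariant I need.

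Next I would chain the single-step lemma into a \textbf{multi-step result} by a straightforward induction on the length of the reduction sequence $P \longrightarrow^* v$, using that the translation relation holds at each intermediate configuration and that a value $v = obj(C,\overline{v})$ translates to itself. Combined with the base observation about values, this establishes claim~(1): $P \longrightarrow^* v$ implies $P' \longrightarrow^* v$. For the divergence claim~(2), I would argue the contrapositive using the same simulation together with a \emph{reflection} direction — note that \autoref{theorem:tp} guarantees $P'$ translates back to $P$ under $\Delta'$, so by applying the single-step simulation in the reverse direction (from $P'$ to $P$) I obtain that any terminating run of $P'$ forces a terminating run of $P$; hence if $P$ diverges, $P'$ cannot terminate, i.e. $P'$ diverges as well.

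I expect the \textbf{main obstacle} to be the precise matching of the auxiliary lookup functions $\dtrBody$ and $\csmBody$ under translation, particularly in the presence of default implementations and wildcard patterns. The rules $\dtrIt$ and $\csmDt$ handle the inherited/default cases (producing an empty field list $\varnothing$), while $\dtrGen$ and $\csmCtr$ handle the per-generator/per-constructor cases, and I must verify that the translation sends each to its exact counterpart so that the substitutions $[\kwthis \mapsto \obj(C,\overline{v_1}),\ldots]$ and $[\kwself \mapsto \obj(C,\overline{v_1}),\ldots]$ produce $\alpha$-equivalent redex contracta. A secondary subtlety is that a single reduction step on one side may correspond to \emph{several} steps on the other (hence the $\longrightarrow^*$ in the simulation lemma rather than $\longrightarrow$), so I must confirm the simulation is genuinely a weak bisimulation in both directions to make the divergence argument airtight; this requires ruling out the degenerate possibility that the translated program could get stuck, which is exactly where \autoref{theorem:progress} and \autoref{theorem:preservation} are invoked.
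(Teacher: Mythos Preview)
Your proposal is correct and, if anything, more carefully structured than the paper's own proof. Both approaches hinge on the same key auxiliary fact—that $\dtrBody$ and $\csmBody$ correspond under translation up to the $\kwthis/\kwself$ renaming (Lemmas~\ref{lemma:csmBody} and~\ref{lemma:dtrBody})—but the organization differs. For condition~(1), the paper does not isolate an explicit single-step simulation lemma; instead it inducts directly on the \emph{translation} derivation and, for each expression rule (\textsc{App2Sel}, \textsc{Sel2App}, \textsc{Obj2New}, \textsc{New2Obj}), argues that if the translated subexpressions evaluate to the same values then so does the whole expression—essentially a big-step argument phrased over the small-step rules. Your lockstep simulation with an explicit invariant is the more standard and more robust packaging. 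For condition~(2), the paper argues directly that an infinite trace of $P$ must revisit some call form infinitely often, and that each such call is mapped to a corresponding call in $P'$; your contrapositive via \autoref{theorem:tp} and the reverse-direction simulation is cleaner, avoids the informal counting, and makes explicit why the simulation must hold both ways. One minor technical point you should pin down: the translation judgment in the paper is only given for source syntax, so to maintain the invariant through reduction you must extend it (trivially) to the runtime value form $obj(C,\overline{v})$, which the paper does tacitly via the \textsc{Obj} rule introduced in the proof of \autoref{theorem:preservation}.
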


\paragraph*{Well-formedness}
Another property of interest for our transformation besides type and semantics preservation,
is the preservation of well-formedness. 
Informally, a \name program is well-formed if variables are in scope, a class implements the exact set of methods declared by the interface, 
and patterns are exhaustive. Intuitively, it can be seen that proving preservation of well-formedness is straightforward. 
We leave the formal proof as future work.

% \begin{theorem}[Roundtrip / Syntax preservation] if $\trans P T P'$ then $\trans {P'} T P$. 
% \end{theorem}

%\input{sections/Example}
%TODO: A table summarizes all the programs?
\section{Implementation and Case studies}\label{sec:case}
Based on \name, we develop a bidirectional transformation tool called \systemname in Scala and conduct several case studies to demonstrate its applicability and effectiveness.
In this section, we discuss the implementation and some of the most interesting case studies. 

\subsection{Implementation and limitations of \systemname}\label{sec:impl}
\systemname employs the \emph{Scalameta} library \cite{scalameta}
for analyzing, transforming, and pretty-printing Scala programs.
Given a Scala program, \systemname gets its AST from the parser provided by Scalameta.
Following the formalization in \autoref{sec:formalization}, the AST is processed by two major passes. 
The first pass collects the global context $\Delta$ and the second pass does the type-directed transformation.
To accept more Scala programs, \systemname has a few more passes in-between such as making the use of \lstinline{this} explicit for \lstinline{class}es or \lstinline{trait}s.
After the transformation, we obtain a new AST. 
Finally, we pretty-print the new AST to produce the transformed program. 

% \paragraph*{Language constructs beyond \name}
Compared to \name, Scala is so rich in terms of language constructs that \systemname cannot handle all of them.
In order to transform programs used in our case studies, \systemname does extend \name to
handle \lstinline{if}-expressions, ordinary \lstinline{def} functions, \lstinline{val} declarations, etc.
Some constructs like \lstinline{if-} expressions have straightforward translation rules that just apply the translation to subexpressions like what \autoref{fig:skip} shows.
Some constructs may affect contexts such as \lstinline{def} and \lstinline{val} declarations, where contexts are changed accordingly before the translation goes on 
to the remaining program. Still there are many features that \systemname cannot handle, which will be discussed in \autoref{sec:morefeatures}.

As a prototype implementation, \systemname has certain limitations.
One limitation is that the soundness of translation on programs that use features beyond \name is not guaranteed, 
although the Scala compiler and test suites can be used to check the correctness of the translated programs.
Another limitation is that when applying \systemname to existing code that is not written in \name style, one has to adjust the code 
manually. Typical adaptions are moving the datatype parameter to first, fixing name inconsistencies between case class and patterns, 
rewriting nested patterns into top-level patterns or adding explicit type annotations for \systemname that can essentially be inferred by the Scala compiler.
These manual adaptions could potentially be automated by \systemname.

\subsection{Case studies overview}
% The case studies are collected from textbook or related work 
% to examine functional decomposition, object-oriented decomposition and/or both.
 % that is written in a pure object-oriented 
\autoref{tbl:study} summarizes the case studies and examples (separated by the line) with the associated file name, the description, the style, SLOC and translation time displayed for each case study.
The case studies are interesting because
\begin{itemize}
  \item \textbf{They are written in different styles}, covering object-oriented, functional, or mix styles;
  \item \textbf{They are extended in various ways}, including data variant and operation extensions;
  \item \textbf{They are non-trivial}, e.g. the SQL query processor \citep{rompf2012lightweight} and the prettier printer \citep{wadler2003prettier} are used as core of industrial-strength databases or compilers.
\end{itemize}
Besides larger case studies, there are also smaller examples ported from different textbooks for introducing functional or object-oriented programming~\cite{felleisen2010design, odersky2008programming} 
or literatures for data structure implementations \cite{okasaki1999purely,fredman1986pairing}.
% These extra examples increase the diversity of our case studies.
Overall, these case studies are rather comprehensive in examining different aspects of \systemname.
The remaining of this section will focus on the most interesting ones. 

% \texttt{List} & List \cite{okasaki1999purely}& FP  & 48 &\\
% \texttt{Bst} & Binary search tree \cite{okasaki1999purely} & FP & 49 &\\
% \texttt{AATree} & Efficient tree & FP & 43 &\\
% \texttt{Heap} & Simple heap & FP & 43 &\\
% With minor tweaks, we show that they can be automatically transformed into the other style using the Scala implementation of \name.
% We roughly measure the effort of making existing code fit into the \name style by counting 
% the changed SLOC.

\begin{table} \footnotesize
\caption{Summary of case studies.}
\begin{tabular}{lccccll}
  \toprule
\textbf{File name} & \textbf{Description} & \textbf{Style}& \textbf{SLOC} & \textbf{Time(ms)} \\
\hline
% LightFP & Traffic light & functional & 21& \\
% LightOOP & Traffic light &  object-oriented & 22&\\
\texttt{SqlFP} & SQL query processor \citep{rompf2019sql} & FP & 176 & 54.1\\
% \texttt{SqlOOP} & SQL query processor \cite{shallow} & OOP & 176 \\
\texttt{PrettierPrinter} & Pretty printer with alternative layouts \citep{wadler2003prettier} & FP & 195 & 33.2\\
\texttt{BoolNormalizer} & Boolean formula normalizer \citep{popl2019} & Mixed & 92 & 14.8\\
\texttt{ProgInScala} & Rendering library \& expression formatter \citep{odersky2008programming} & Mixed & 106 & 26.9\\
\texttt{Json} & JSON formatter and validator  & Mixed & 90 & 19.7 \\
\texttt{Scans} & Circuit DSL \cite{shallow} & FP & 63 & 10.7 \\
\hline
\texttt{RiverSystem} & River location \cite{felleisen2010design} & OOP & 23 & 3.9\\
\texttt{Shape} & Shape representation \cite{felleisen2010design} & OOP & 31 & 2.5\\
\texttt{Stack} & Stack implementation \cite{felleisen2010design} & OOP & 18 & 4.4\\
% \texttt{PaymentMethod} & Payment handler & OOP & 56 & 4.8\\
\texttt{PairingHeap} & Efficient heap implementation \cite{fredman1986pairing}& FP  & 29 & 5.0\\
\texttt{ListFP} & List implementation \cite{okasaki1999purely}& FP  & 38 & 3.9\\
\texttt{Bst} & Binary search tree implemetation \cite{okasaki1999purely}& FP  & 40 & 4.9\\
  \bottomrule
\end{tabular}
\label{tbl:study}
\end{table}

\subsection{SQL Processor}
\label{sec:sql}
Starting from an interpreter, \citet{rompf2019sql} show how to turn the simple interpreter into an efficient compiler through staging \`a la LMS~\cite{rompf2012lightweight}.
Their SQL processor implementation uses functional decomposition in Scala. 

% to facilitate adding operations that compile SQL queries
% to efficient Scala or C code. \citet{shallow} reimplement the SQL processor using object-oriented decomposition
% and argue that object-oriented decomposition brings extra modularity such as the ability to modularly add new variants.
% In this case study, we adapt these two implementations and show that they are interchangeable with \name.

% \paragraph*{Functional decomposition}
% A parser is implemented that parses a SQL query into a \lstinline{Operator}.
A SQL query is parsed into a relational algebra operator.
Initially, the \lstinline{Operator} datatype has 5 constructors: 
\lstinline{Scan}, \lstinline{Project}, \lstinline{Filter}, \lstinline{Join} and \lstinline{Print}.
% The \lstinline{Filter} constructor depends on the \lstinline{Predicate} datatype for
% comparing the (in)equality of two \lstinline{Ref} instances.
The core consumer on \lstinline{Operator} is \lstinline{execOp}, which accumulates the action
to the record in the parameter \lstinline{yld} according to the \lstinline{Operator}. 
The following excerpt is copied from \citet{rompf2019sql}'s paper:
\begin{lstlisting}
def execOp(o: Operator)(yld: Record => Unit): Unit = o match {
  case Scan(filename) => processCSV(filename)(yld)
  case Print(parent) => execOp(parent) { rec => printFields(rec.fields) }
  case Filter(pred, parent) => 
    execOp(parent) { rec => if (evalPred(pred)(rec)) yld(rec) }
  ...
}
\end{lstlisting}
% For example, a SQL query that finds out the room and title of all talks at 9 am
% \begin{lstlisting}
% select room, title from talks.csv where time = '09:00 AM'
% \end{lstlisting}
% is parsed into the following relational algebra operator:
% \begin{lstlisting}
% Project(Schema("room", "title"),
%   Filter(Eq(Field("time"),Value("09:00 AM")),
%     Scan("talks.csv")))
% \end{lstlisting}
% where the \lstinline{select}, \lstinline{from}, and \lstinline{where} clauses are respectively 
% parsed as \lstinline{Project}, \lstinline{Scan}, and \lstinline{Filter} operator.
% The condition used in the \lstinline{where} clause is captured by \lstinline{Eq} (a constructor of \lstinline{Predicate}) with its two operands being
% \lstinline{Field} and \lstinline{Value} (both are constructors of \lstinline{Ref}). 

% \lstinputlisting[linerange=9-17]{../app/src/main/scala/SqlFP.scala}%APPLY:OPERATOR

\noindent which is already written in a style close to \systemname. 
In fact, their entire implementation is written in this manner.
Consequently, a few renaming on variables is sufficient to fit their implementation into \systemname. 

Later on, \citet{rompf2019sql} introduce two new operators, \lstinline{Group} and \lstinline{HashJoin}, for
supporting \lstinline{group by} syntax and a more efficient hash-join operator.
With the functional decomposition version, these extensions have to be added by defining new case classes of \lstinline{Operator} and modifying
existing consumers on \lstinline{Operator} scattered around the file.
In contrast, it can be simplified by switching to object-oriented style and adding the extensions modularly as \lstinline{class}es with \systemname: 
\begin{lstlisting}
class Group(keys: Schema, agg: Schema, op: Operator) extends Operator {
  def resultSchema = keys ++ agg
  def execOp(yld: Record => Unit) = ...
}
class HashJoin(left: Operator, right: Operator) extends Operator {
  def resultSchema = resultSchema(left) ++ resultSchema(right)
  def execOp(yld: Record => Unit): Unit = ...
}
\end{lstlisting}
The transformed object-oriented code looks very much like the core part of \citet{shallow}'s hand-written implementation. 
Similarly, when the need of new operations such as compiling SQL queries to C backend arises, 
we can transform the implementation back to functional style to facilitate operation extensions.
Again, the \systemname-transformed functional version with extensions is almost identical to the manually extended version done by \citet{rompf2019sql}.

In fact, as acknowledged by \citet{rompf2019sql}, the decomposition style switch happens when
they grow this simple SQL processor into a realistic SQL engine. The decomposition style is switched
from functional to object-oriented to support a large number of operators such as various 
join operators (semi joins, anti joins, outer joins, etc.).
With \systemname, however, this switch can be done seamlessly without the pain of pervasive refactoring on the original code.

\subsection{Prettier printer}\label{sec:pprint}

\citet{wadler2003prettier} presents a pretty printer library that allows a document to be printed with different layouts efficiently.
The implementation was originally written in Haskell using functional decomposition.
\citet{wadler2003prettier} shows a simpler and less efficient version first.
A document is represented by a datatype (\lstinline{TDoc}) with 3 constructors for creating empty documents (\lstinline{TNil}), lifting strings into documents (\lstinline{TText}) and inserting line breaks (\lstinline{TLine}).
Other combinators on \lstinline{TDoc}  for adding indentations (\lstinline{nest}) and concatenating documents (\lstinline{<>}) are defined as consumers on \lstinline{TDoc},
along with a \lstinline{layout} consumer for printing a \lstinline{TDoc} as a string.

% with \lstinline{NIL}, \lstinline{TEXT}, \lstinline{LINE}, \lstinline{NEST} and \lstinline{:<>} as initial constructors 
The simple version is then extended to support alternative layouts.
\lstinline{TDoc} is extended with both new consumers (e.g. \lstinline{group}, \lstinline{flatten} and \lstinline{pretty}) as well as a new constructor \lstinline{Union}.
To further improve efficiency, another document datatype \lstinline{DOC} is defined and several consumers are introduced for compiling \lstinline{DOC} to \lstinline{TDoc} efficiently.

We ported the Haskell implementation into Scala and then adapted the ported implementation into \systemname style.
Most of the adaptions are trivial. Only two consumers deserve some attention since they use advanced forms of pattern matching. 

The first is the \lstinline{fits} consumer which uses a guarded wildcard pattern in the beginning:
\begin{lstlisting}
def fits(w: Int, x: TDoc): Boolean = x match {
  case _ if w < 0  => false
  case TText(s, x) => fits(w - s.length, x)
  case TNil        => true
  case TLine(i, x) => true
}
\end{lstlisting}  
We split \lstinline{fits} into two consumers and replace the guarded wildcard pattern as an \lstinline{if-expression}:
% calls the auxiliary consumer \lstinline{fitsAux} when 
\begin{lstlisting}
def fits(self: TDoc)(w: Int): Boolean = if (w < 0) false else fitsAux(self)(w)
def fitsAux(self: TDoc)(w: Int): Boolean = self match {
  case TText(s, x) => fits(x)(w - s.length)
  case TNil        => true
  case TLine(i, x) => true
}
\end{lstlisting}

The other one is \lstinline{be} function, which deeply pattern matches on a list of pairs.
\begin{lstlisting}
def be(w: Int, k: Int, xs: List[(Int, DOC)]): TDoc = xs match {
  case List()               => TNil
  case (i, NIL) :: z        => be(w, k, z)
  case (i, :<>(x, y)) :: z  => be(w, k, (i, x) :: (i, y) :: z)
  ...
}
\end{lstlisting}
Similar to \lstinline{fits}, we break \lstinline{be} into two mutually recursive functions with
nested patterns rewritten as top-level patterns:
% and change the signature of \lstinline{be} a.
\begin{lstlisting}
def be(w: Int, k: Int, xs: List[(Int, DOC)]): TDoc = xs match {
  case List()      => TNil
  case (n, x) :: z => beAux(x: DOC)(w, k, n, z)
}
def beAux(self: DOC)(w: Int, k: Int, n: Int, z: List[(Int, DOC)]): TDoc = self match {
  case NIL        => be(w, k, z)
  case :<>(x, y)  => be(w, k, (n, x) :: (n, y) :: z)
  ...
}
\end{lstlisting}

With these adaptions done, we obtained an object-oriented implementation of Wadler's printer transformed by \systemname.  
Here is an excerpt of the \lstinline{DOC} interface produced by \systemname:
\begin{lstlisting}
trait DOC {
  def <>(y: DOC): DOC = new :<>(this, y)
  def nest(i: Int): DOC = new NEST(i, this)
  def group: DOC = new :<|>(this.flatten, this)
  def flatten: DOC
  def pretty(w: Int): String = this.best(w, 0).layout
  ...
}
\end{lstlisting}
A simple example to demonstrate the transformed library could be:
\begin{lstlisting}
val d: DOC = text("[") <>              //[
  (line <> text("Hello") <>            //  Hello 
  line <> text("world")).nest(2) <>    //  world
  line <> text("]")                    //]
\end{lstlisting}
where the result of calling \lstinline{d.pretty(10)} is shown in comments.
Larger examples contained in the original implementation like trees and XMLs all work well in the transformed library.

Similar to the extensions discussed in \autoref{sec:sql}, 
adding extensions mentioned above can also be simplified by switching decomposition style back and forth with the help of \systemname.

% % Our second case study
% \scans~\cite{hinze2004algebra} is a domain-specific language for describing parallel prefix circuits.
% Gibbons and Wu \cite{gibbons2014folding} presents a Haskell implementation using shallow embeddings.
% Zhang and Oliveira~\cite{shallow} reimplement \scans in Scala. 
% We adapt their implementation 
% \scans originally consists

%Since the correctness of the translation has been proved by \autoref{sec:metatheory}, our evaluation focuses on other aspects of \name. 
%Specifically, we evaluate \name by answering the following questions:

\subsection{Boolean formula normalizer}\label{sec:case2}
The boolean formula normalizer case study is reproduced from \citet{popl2019}'s work. 
% which was originally implemented by \citet{danvy2011}.
% "inter-derive reduction-based and reduction-free negational normalization functions"
The goal is to develop a normalizer that evaluates a boolean formula to its negation normal form 
by repeatedly applying De Morgan's laws or double negation elimination rules.
The development is done through several iterations by switching the decomposition style on a particular type
back and forth for extensions.
This case study further examines \systemname's ability to transform selected types.
% The implementation is written using a mixed style
% Specifically, the following rewriting rules are applied:

% \begin{align*}
% \lnot(P \land Q) & = \lnot P \lor \lnot Q \\
% \lnot(P \lor  Q) & = \lnot P \land  \lnot Q \\
% \lnot\lnot P & = P
% \end{align*}
The implementation contains several datatypes for modeling boolean formulas (\lstinline{Expr}), redexes (\lstinline{Redex}), values (\lstinline{Value}) and 
found value/redex (\lstinline{Found}).
% consumers for conversions between these datatypes (\lstinline{asExpr} and \lstinline{eval}).
% The \lstinline{Found} datatype represents the search result for a redex.
Core consumers are \lstinline{search}, \lstinline{searchPos} and \lstinline{searchNeg}, which search for a redex in a \lstinline{Expr} using the continuation-passing style.
\begin{lstlisting}
trait Value2Found { def apply(value: Value): Found }

def search(self: Expr): Found = searchPos(self)(EmptyCnt)
def searchPos(self: Expr)(cnt: Value2Found): Found = self match {
  case EVar(n) => cnt.apply(ValPosVar(n))
  case ENot(e) => searchNeg(e)(cnt)
  case EAnd(l,r) => searchPos(l)(new AndCnt1(r,cnt))
  case EOr(l,r) => searchPos(l)(new OrCnt1(r,cnt))
}
\end{lstlisting}
Initially, \lstinline{Value2Found} was modeled as an interface with only destructor \lstinline{apply}.
Some instances of \lstinline{Value2Found} used in the definition above are given below:
\lstinputlisting[linerange=61-66]{./app/src/main/scala/Iteration1.scala}%APPLY:CNT
We adapt local comatches (similar to anonymous objects) used by \citet{popl2019} to ordinary generators in \systemname. 

% The implementation by ~\citet{popl2019} model them as local comatches as they are used only once. 
% Local comatches are similar to anonymous objects, which will break the symmetricity. 
% \citet{popl2019} address this problem by designing a special syntax for explicitly providing a name and capturing the surrounding environment explicitly.
% This can hardly be enforced for an existing language like Scala.
% Our workaround is to define them as ordinary generators in \name.

In the second iteration, \lstinline{Value2Found} was switched to a datatype, renamed as \lstinline{Context} and extended
with a new consumer \lstinline{findNext}.
In the third iteration, \lstinline{Context} datatype was extended with another consumer \lstinline{substitute} and the \lstinline{Expr} datatype was extended with an \lstinline{evaluate} consumer. 
By switching \lstinline{Context} back to an interface, we derived the final implementation.

Unsurprisingly, \systemname makes the iterative development of the boolean formula normalizer smoothly by allowing style switches on selected types.

\subsection{Findings}
We have already established the semantic preserving nature of the translation in \autoref{sec:metatheory}. The case studies 
above additionally allow us to conclude a number of findings below on the effectiveness and applicability of \systemname. 
% \paragraph{Q1}
% The transformed programs all pass type-checking of the Scala compiler. 
% We additionally have some unit tests to check that the transformed programs produce the same result as the original programs.
% We further check the correctness by transforming the same program multiple times to see whether we can get back the original program.

\paragraph{\systemname facilitates extensions.} As we can see in \autoref{sec:sql}, the translated object-oriented code behaves exactly as how well-written object-oriented code
should be in terms of extensibility. 
We can add new classes such as \lstinline{Group} and \lstinline{HashJoin} without touching any of the existing code. 
This very desirable behavior demonstrates the effectiveness of the translation and can be observed throughout the case studies.
% (see the supplementary material for more examples of translated code and how it can be extended). 
Moreover, the extended code could be readily translated back to the original style, resulting in code similar to the hand-written one. 

%Yes, illustrated by our case studies on \lstinline{SqlFP} and \lstinline{BoolNormalizer}.

\paragraph{\systemname is applicable to a lot of existing code with moderate adaptions.} 
As demonstrated in the case studies, most adaptions are very straightforward and only involve moving the datatype parameter to the first parameter lists, 
fixing the name inconsistency of variables between case class and case clauses, or providing type information that \systemname currently cannot infer. 
The only slightly complicated case is when advanced forms of pattern matching are used.  
They need to be rewritten into the normal forms as illustrated.

% For a few cases that advanced forms of pattern matching are used, we need to rewrite them into the normal forms, as illustrated .
% The effort of adaption is roughly measured by counting the SLOC difference between the original and adapted programs.

\paragraph{\systemname transforms programs fast}
The case study programs are compiled using Scala 2.12.13, JDK 1.8.0\_292 and executed on a MacBook Pro (M1) with 8 cores and 16 GB memory.
We use ScalaMeter 0.8.2 \cite{scalameter} microbenchmark framework to measure the average time for 100 runs of transformation. 
As shown in the last column of \autoref{tbl:study}, the translation time is negligible and is positively correlated with the SLOC of the program being transformed.
\section{Discussion}{\label{sec:discussion}}

In this section, we discuss features not covered by \name and 
how \name can be applied to other multi-paradigm languages.

\subsection{Features beyond \name}\label{sec:morefeatures}
As a core calculus, there are a lot of language features not covered by \name.
As discussed in \autoref{sec:impl}, \name can be extended to handle more language features. 
Here, we discuss additional features that are not yet supported or cannot be supported in principle by \name.
% These features are either non-essential or do not yet have a clear correspondence in the other decomposition style or cannot be supported at all.

\paragraph*{Type tests/type casts}
Violating pure OOP principles, classes are sometimes used as types for type tests and type casts.
Type tests/type casts are frequently seen in \emph{binary methods}~\cite{BruEtAl96} for inspecting how the other object is constructed.
% A classical example of binary methods is the \lstinline{equals} operation for comparing 
% the structural equality of two objects. 
For example, to compare structural equality of two sets, one may change \lstinline{Insert} to:
\begin{lstlisting}
class Insert(val s: Set, val n: Int) extends Set { // val added for getters
  ...
  def equals(that: Set) =
    if (that.isInstanceOf[Insert]) {               // type test
      val thatInsert = that.asInstanceOf[Insert]   // type cast
      this.n == thatInsert.n && (this.s equals thatInsert.s)
    }
    else false
}
\end{lstlisting}
Besides being used as types, fields of \lstinline{Insert} are made public, exposing its internal representation.
Nonetheless, \lstinline{equals} can be rewritten without casts by adding a pair of \texttt{isInsert} and \lstinline{fromInsert} destructors ~\cite{emir2007matching}
for checking whether an object is created via \lstinline{Insert} and then extracting fields from an \lstinline{Insert} object:
% For example, \lstinline{Set} is augmented with a pair of destructors for class \lstinline{Insert}:
\begin{lstlisting}
trait Set {
  ...
  def isInsert: Boolean = false 
  def fromInsert: (Set,Int) = throw new RuntimeException
}  
\end{lstlisting}
In \lstinline{Insert}, \lstinline{isInsert} and \lstinline{fromInsert} are overridden accordingly: %so that \lstinline{equals} can be defined in terms of them:
\begin{lstlisting}
class Insert(s: Set, n: Int) extends Set {
  ...
  override def isInsert = true
  override def fromInsert = (s,n)
  def equals(that: Set) = 
    that.isInsert && s.equals(that.fromInsert._1) && (n == that.fromInsert._2)
}
\end{lstlisting}
This implementation still has issues like polluting the interface with implementation details
due to the fact that object-oriented decomposition discourages inspections on internal representation of other objects.
If such as an ability is a must, functional decomposition might be a better option.

\paragraph*{Complex patterns}
Functional decomposition allows inspection internal representation of data through pattern matching. 
Moreover, patterns can be nested or guarded by a a condition.  %multi-methods
Such complex patterns correspond to multi-methods \cite{chambers1992object}, which are neither available in pure OOP nor in many multi-paradigm languages like Scala.
Nevertheless, they can be rewritten into standard forms. 
As illustrated in \autoref{sec:pprint}, guarded patterns can be rewritten as if expressions 
and nested patterns can be replaced as multiple top-level patterns.
% To keep the code neutral on decomposition style, one can rewrite \lstinline{equals} using only top-level pattern matching.
If that becomes a burden, then it is time to fix the decomposition style to be functional. 

\paragraph*{Inheritance}
The original pure OOP \citep{cook2009} excludes inheritance since it is not a feature specific to OOP.
In \autoref{sec:overview}, we have shown a lightweight use of inheritance based on interfaces corresponding to wildcard pattern in functional decomposition.
There are other forms of inheritance such as class-based inheritance or even multiple inheritance.
The use of interface-based inheritance is considered less harmful than other forms as it neither introduces additional dependencies
nor causes the diamond problem. For other forms of inheritance that do not have a counterpart in functional decomposition, 
we rewrite them as explicit delegations. For example \lstinline{Neg} shown in \autoref{sec:intro} can be rewritten as:
\begin{lstlisting}
class Neg(e: Exp) extends Exp { def eval = new Sub(new Lit(0), e).eval }
\end{lstlisting}
% \systemname has some limited support 
% If complex class hierarchies with 

\paragraph{Mutable state} 
Mutable state is typically viewed as a object-oriented programming feature, which is indeed a feature from imperative programming. 
We can both write functional and object-oriented programs that manipulate mutable state. 
Supporting mutable state in \name is possible by allowing mutable fields in both generators and constructors.
We leave a formalization of \name with mutable state as future work. 
% One thing to note is that pattern variables create alias. We need to mutate the state via self.

\paragraph{Subtyping}
Subtyping is another feature that is often (mistakenly) considered as a object-oriented feature.
Nevertheless, object-oriented languages do have a better support for subtyping by allowing user-defined subtyping relations.
In OOP, interfaces can be extended with new destructors:
\begin{lstlisting}
trait ExtSet extends Set { def intersect(other: ExtSet): ExtSet }
\end{lstlisting}
The extended interface \lstinline{ExtSet} is a subtype of \lstinline{Set} and objects of \lstinline{ExtSet} can be passed as arguments to functions that accept \lstinline{Set}s.
However, a datatype is typically not allowed to be extended by another datatype. 
Even if it is allowed, e.g. in Scala:
\begin{lstlisting}
sealed trait ExtSet extends Set
case class Intersect(s1: ExtSet, s2: ExtSet) extends ExtSet
\end{lstlisting}
the extended datatype \lstinline{ExtSet} is treated as a subtype of \lstinline{Set} 
and existing consumers on \lstinline{Set} would be warned with a missing case for \lstinline{Intersect}.
However, \lstinline{ExtSet} should be indeed a \emph{supertype} of \lstinline{Set}.
Therefore, \name currently does not handle extended interfaces/datatypes as the goal of \name is to be applicable to existing multi-paradigm languages.

\paragraph{Parametric polymorphism}
Parametric polymorphism is a useful feature that is not yet supported by \name. 
Different forms of parametric polymorphism pose different level of challenges.
Supporting parametric data types and generic functions is relatively simple.
For instance, \lstinline{Set} can be defined as a parametric algebraic datatype with 
the element type captured as a type parameter:
\lstinputlisting[linerange=3-4]{./app/src/main/scala/GSetFP.scala}%APPLY:GSET_FP
Accordingly, consumers on \lstinline{Set[A]} are also parameterized by \lstinline{A}.
In principle, this generic version of \lstinline{Set} can be transformed to parameterized class hierarchies and vice versa.
% \lstinputlisting[linerange=16-20]{../app/src/main/scala/GSetFP.scala}%APPLY:GCONTAINS_FP
However, supporting generalized algebraic datatypes (GADTs)~\citep{xi} would be problematic
as existing multi-paradigm languages may not support GADTs and their counterpart GAcoDTs~\cite{klaus2018}.
For example, GADTs are known to have soundness issues in Scala~\cite{giarrusso2013open}.
What makes it even tricker is that type parameters may interact with subtyping. For example,
type parameters be constrained with variance or bounds in Scala. 
Therefore, how to fully support parametric polymorphism remains an open problem.

\subsection{Applying \name to Scala 3}
The latest version of Scala, Scala 3, opens up many possibilities to further simplify \name.
Here, we summarize the key changes that are particularly relevant to our work.
% unified syntax for Scala

\paragraph*{Enumerations} In Scala 3, closed algebraic datatypes can be defined using \lstinline{enum}, e.g. 
% For example, the Scala 3 way of defining \lstinline{Set} in \autoref{setfp} is:
\begin{lstlisting}
enum Set:
  case Empty
  case Insert(s: Set, n: Int)
  case Union(s1: Set, s2: Set)
\end{lstlisting}
where constructors of \lstinline{Set} are defined in one place and explicit \lstinline{extends} clauses can be omitted.

\paragraph*{Extension methods}
Scala 3 introduces \emph{extension methods}, which allows adding methods to existing
types. Moreover, extension methods are invoked using the dot notation just like instance methods. 
Altogether, we can define consumers on datatype as extension methods, e.g. \lstinline{contains}:
\begin{lstlisting}
extension (self: Set) def contains(i: Int): Boolean = self match
  case Empty        => false
  case Insert(s,n)  => n == i || s.contains(i)
  case Union(s1,s2) => s1.contains(i) || s2.contains(i) 
\end{lstlisting}

This change brings two main advantages. First, extension methods
distinguish consumers from ordinary functions through the surface syntax.
As shown by the definition above, it clearly indicates that \lstinline{eval} is a consumer on \lstinline{Set}.
Second, recursive calls on \lstinline{contains} are written as \lstinline{s1.contains(i)} just like it is a destructor.

\paragraph*{Optional \lstinline{new}}
Scala 3 allows classes to be instantiated without \lstinline{new}, making it consistent with ordinary classes. 
Together with consumers defined as extension methods, client code for functional and object-oriented decomposition 
is made identical.  As a consequence, the bidirectional transformation can be simplified with only rules for definitions and the switch of decomposition style will not 
affect client code.

% \subsection{Duality in programming languages}

\begin{figure*}[t]
\begin{minipage}{.47\textwidth}
\lstinputlisting{sections/kotlinOOP}
\end{minipage}
\begin{minipage}{.52\textwidth}
\lstinputlisting{sections/kotlinFP}
\end{minipage}
\caption{Object-oriented decomposition vs functional decomposition in Kotlin}
\label{fig:kotlin}
\end{figure*}
\subsection{Applying \name to other multi-paradigm languages}
Although Scala is used throughout this paper for demonstration,
our approach is indeed not Scala-specific. The core features we are 
focusing on are commonly available in other multi-paradigm languages.
Modulo syntax differences, languages such as OCaml, F\#, Swift, Rust, and Kotlin are possible
candidates for \name.  As an example, \autoref{fig:kotlin} ports the simple arithmetic expression language shown in \autoref{decomposition} to Kotlin.
We can see that the Kotlin version looks very much like the Scala version.
Therefore, most of the rules shown in \autoref{sec:formalization} can be directly applied to Kotlin.
Of course, adjustments may be necessary for dealing with syntax differences. 
For example, Swift separates the field declarations and the initializer for generators, then the abstract syntax and 
rules named with \textsc{Gen} need to be changed accordingly.
There is an ongoing work that implements \name in Rust.
Preliminary results show that \name is also applicable to Rust, although Rust's type system, in particular smart pointers, poses new challenges that have not yet solved.

%TODO: mention rust implementation?

\begin{comment}
\begin{figure*}[t]
\begin{minipage}{.48\textwidth}
\begin{lstlisting}[language=caml]
type term = Lit of int 
          | Add of term * term
            
let rec eval term = match term with
  | Lit n -> n
  | Add (e1,e2) -> eval e1 + eval e2 

let e = eval (Add (Lit 1, Lit 2))
\end{lstlisting}
\end{minipage}
%
\begin{minipage}{.515\textwidth}
\begin{lstlisting}[language=caml]
class virtual term = object (self) 
  method virtual eval: int end
class lit(n:int) = object (self) inherit term 
  method eval = n end
class add (t1: term) (t2: term) = object (self) inherit term
  method eval = t1#eval + t2#eval end
let e = (new add (new lit 1) (new lit 2))#eval
\end{lstlisting}
\end{minipage}
\caption{OCaml}
\label{ocaml}
\end{figure*}
\end{comment}
\section{Related work}\label{sec:related}
% The relationship between objects and abstract data types.
% algebra / coalgebra(characteristic function)
\paragraph*{Expression Problem}
The Expression Problem dates back to \citet{Reynolds78}, who first pointed out that
user-defined types (abstract datatype) and procedural data structures are two complementary approaches to data abstraction in terms of extensibility.
\citet{cook1990object} further distilled the tradeoffs between the two data abstraction approaches and argued that objects are essentially procedural data structures.
% Matrix.% A program can be viewed as a matrix where data variants are rows and operations are columns.
% object-oriented decomposition is row-based and functional decomposition is column-based.
\citet{expPb} popularized the problem by coining the term ``Expression Problem'' and describing the requirements that a proper solution should meet.
Thereafter, many solutions to the Expression Problem have been proposed~\citep{zenger2001eadds,zenger-odersky2005,eptrivially16,oliveira2012extensibility,carette2009finally,swierstra2008data,Hofer:2008:PED:1449913.1449935,toplas}, to list a few.
% Most of the solutions rely on parametric polymorphism and/or subtyping, which \name currently does not support yet.
However, solutions to the Expression Problem typically introduce extra complexity, parameterization, and indirections 
that may cause performance penalty or require boilerplate code \citep{weixin2017,castor}. 
In contrast, \name allows programmers to write the code in their familiar style without performance penalty incurred by indirections.
% It would be interesting to investigate these Expression Problem solutions and their duality in future work.
% Functional decomposition approach shown in \autoref{} can be viewed sa an abstract data type
%
% \paragraph*{Defunctionalization and refunctionalization}
% Defunctionalization~\citep{reynolds1972definitional} is a trantformation that replaces higher-order functions by first-order programs with pattern matching.

% refunctionalization: there must be only one function that pattern matches on algebraic datatypes

\paragraph*{The duality of data and codata}
% The duality of language features, in  
There is a line of work studying the duality of data and codata~\citep{rendel2015,klaus2018,downen2019codata,popl2019,laforgue2017copattern}.
In particular, \citet{popl2019}'s work is most closely related and our formalization is greatly inspired by their work.
The major differences are that our transformation is type-directed whereas theirs is syntax-directed and 
our transformation can be applied to existing languages whereas theirs requires a new language design.
In their language design, destructors and consumers are in the same namespace with distinct names, which is quite different from the setting of existing languages
where classes have their own namespace and names can be overloaded.
As discussed in \autoref{sec:why-type}, type information is critical for transforming overloaded names correctly.
Their work additionally supports local (co)pattern matching.
However, specialized syntax is invented for explicitly capturing the environment and providing a name for transformation. 
This can hardly be enforced for existing languages and some convenience of local (co)pattern matching is lost.
Such local pattern matching and copattern matching are rewritten as top-level definitions in \name.
\citet{klaus2018} investigate the duality of generalized algebraic datatype and codatatype to their generalized versions.
How to port their formalization to existing languages remains a problem as there is inadequate support for these features in existing languages.
% consumer and generator like what we have done for \autoref{sec:case2}.
There are other transformation schemes between data to codata with a focus on compositionality~\citep{downen2019codata,laforgue2017copattern}.
Unlike \name, \citet{downen2019codata} compile data to codata using the \textsc{Visitor} pattern, which 
does not switch the dimension of extensibility.

\paragraph*{Multiple views of a program}
As a program can be written in multiple ways, \emph{intentional programming}~\citep{simonyi1995the} aims at capturing the intents of a programmer underneath the surface syntax.
The intents are a high-level representation of the program maintained in some database and the view, code, is generated on the fly. 
Following the idea of intentional programming, Decal~\citep{janzen2004programming} is a tool on an OOP language with open classes for allowing crosscutting concerns to be added in one place.
It maintains an internal representation of the program using a SQL engine and generates views from the internal representation. 
Decal users can either choose modules view (similar to functional decomposition in \name) or classes view (similar to object-oriented decomposition in \name) 
for editing and the changes to the view will be reflected to the internal representation.
Compared to Decal, \name does not need to maintain such an internal representation, simplifying the implementation. 
Moreover, unlike \name, Decal does not have a formalized transformation.
% multi-dimensional separation of concerns

\begin{comment}
\paragraph{Scala Formalization}
% There are many attempts to formalize
Scala is such a feature-rich language that there existing many work attempting to formalize different aspects of Scala.
\citet{amin2016essence} formalize Dependent Object Types (DOT) in Scala.
$_k$DOT~\citep{kabir2018kappadot} extend DOT with constructors and mutable fields. 
$_l$DOT initialization effect system that statically prevents reading a null reference from an uninitialized object.
\citet{10.1145/3486610.3486894} tried to formalize multiple inheritance and intersection types in Scala using a FJ like calculus. 
\citet{stucki2021theory} extend DOT with higher-kinded types.
% Scala Step-by-Step: Soundness for DOT with Step-Indexed Logical Relations in Iris
However, to our best knowledge, there is no work on the functional part of Scala.
The semantics is different from what current Scala is.
Although the goal of \name is not to formalize parts of Scala, it can be useful as a minimal calculus for a multi-paradigm languages.
%  Objects and classes, co-algebraically
% Since this work is focused on Scala, with its hugely expressive type system and vast number of features--particularly the ones from [Odersky & Zenger (2005)](https://dl.acm.org/doi/10.1145/1094811.1094815), it'd be nice to have a more detailed analysis. The paper could at least mention those features and perhaps speculate a bit on their impact on this work, if a thorough analysis is outside of its scope.
% path-dependent types, implicit parameters,
\end{comment}

% JAVA & LAMBDA: A FEATHERWEIGHT STORY
\section{Conclusion and future work}
In this paper, we have shown that restricted forms of functional and object-oriented decomposition are symmetric.
We propose a type-directed bidirectional transformation between functional and object-oriented decomposition in the \name calculus and proved the soundness of \name. 
Moreover, we have implemented \name in Scala called \systemname and conducted several case studies to demonstrate the applicability and effectiveness of \systemname.

In future work, we would like to explore more features such as mutable state and parametric polymorphism in \name.
How these additional features interact with existing features and affect the duality needs further investigation. 
We would also like to improve the usability of \systemname by reducing manual adaptions through automatic rewriting nested patterns, better type inference, etc. 
and develop similar tools on other multi-paradigm languages such as Rust. 
Another direction of future work is to mechanize our manual proofs in a theorem prover like Coq, where \citet{popl2019}'s work is a good start point.

% integrate \name with an IDE and provide decomposition style switching as a refactoring functionality.
% As discussed in \autoref{}, we would like to apply our transformation to other languages.

\bibliography{paper}
\appendix
\section{Additional formalization}\label{sec:additional-formalization}

In this section, we provide some additional formalization that, for brevity reasons, was left out 
of \autoref{sec:formalization}. Signatures for generators, constructors, consumers, and destructors
are collected using the following rules: 
% \begin{figure}
\begin{mathpar}
\framebox{$\sig(N) = T$}\\
\inferrule*
{\defs(C) = \class C D {Fun}}
{ \sig(C) = \overline{T} \rightarrow D }

\inferrule*
{\defs(C) = \ctr C D}
{ \sig(C) = \overline{T} \rightarrow D }

\inferrule*
{\defs(f) = \csmcase e}
{ \sig(f) = D \rightarrow \overline{T} \rightarrow T }
\\
\framebox{$\dtrType(f,D) = \overline{T} \rightarrow T$}\\
\inferrule*
{\defs(D) = \interface D {Dtr} \\ \dec \in \overline{Dtr}}
{ \dtrType(f,D) = \overline{T} \rightarrow T}
\end{mathpar}
% \caption{A few more preprocessing rules for computing $\Delta$}
% \label{fig:more-compute-delta}
% \end{figure}

\noindent \autoref{fig:skip-rest} gives the translation rules for skipping over code that were not captured in \autoref{fig:skip}.
Similarly to \autoref{fig:skip}, all the rules in \autoref{fig:skip-rest} work by only transforming the inner expressions.
\begin{figure}
\begin{mathpar}
\inferrule[Dt2Dt]
{\trans L T L'}
{ \trans {\sealed D; L} T {\sealed D; L'}}

\inferrule[Ctr2Ctr]
{ \trans L T L'}
{ \trans {\ctr{C}{D}; L} T \ctr{C}{D}; L'}

\inferrule[Fun2Fun]
{ \transCtx {\overline{x : T}} e T {e'}} % f : \overline{T} \rightarrow T is added to ctx 
{ \trans {\fun x e} {T} {\fun{e'}}}
%%{ \trans {\fun{e}} {\overline{T} \rightarrow T} {\fun{e'}}}

\inferrule[Csm2Csm]
{ \overline{\transCtx{\kwself: D, \overline{x:T}} {\case P e} {T} {\case P e'}} \\ \trans L T L'}
{ \trans {\csmcase e; L} {T} {\csmcase {e'}; L'}}
% $\inbracket{\kwdef \ f(x : D)(\overline{x : T}): T  = \overline{\kwcase\ P \Rightarrow e}} \text{ when } f \in \bigcup\limits_{D_i \in \ct}\consumer{D_i} = \varnothing$ \\
\end{mathpar}
% $\inbracket{\kwdef \ f(x : D)(\overline{x : T}): T  = \overline{\kwcase\ P \Rightarrow e}} \text{ when } f \in \bigcup\limits_{D_i \in \ct}\consumer{D_i} = \varnothing$ \\

% \framebox{$\trans e T e'$}
\begin{mathpar}

\inferrule[App2App]
{ \trans {e_1} D {e_1'} \\ \sig(f) = D \rightarrow \overline{T} \rightarrow T \in \Gamma \\ \overline{\trans {e_2} T {e_2'}}}
{ \trans{f(e_1)(\overline{e_2})}{T}{f(e_1')(\overline{e_2'})}}

\inferrule[Obj2Obj]
{ \sig(C) = \overline{T} \rightarrow D \\ \overline{\trans e T e'}}
{ \trans{C(\overline{e})}{D}{C(\overline{e'})}}

\end{mathpar}
\caption{Rules for types not selected for transformation}
\label{fig:skip-rest}
\end{figure}

\section{Lemmas and proofs} \label{sec:proofs}
In this section, we provide the proofs for the soundness theorems stated in \autoref{sec:soundness}, together with some helper lemmas.
To begin with, Lemmas~\ref{lemma:ctr2gen}-\ref{lemma:context} describe the relationship between the global context before and after the translation.  Lemmas \ref{lemma:csmBody} and \ref{lemma:dtrBody} describe the relation between the consumer and destructor bodies before and after the translation.

\begin{lemma}[Constructor to generator]\label{lemma:ctr2gen} if $\sig(C) = \overline{T} \rightarrow D$ and $C \in \ctrs{D}$ 
  then, after the translation, $\sig(C)' = \overline{T} \rightarrow D$ and $C \in \generator{D}'$. 
\end{lemma}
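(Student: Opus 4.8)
The plan is to unfold the relevant definitions and read the conclusion directly off the shape of the translation rule \textsc{Ctr2Gen}. First I would observe that $C \in \ctrs{D}$ forces the original program to contain a definition $\defs(C) = \ctr{C}{D}$, that is, $\kwcase\; C(\overline{x:T})\;\kwextends\;D$; inverting the constructor signature rule of \autoref{sec:additional-formalization} against the hypothesis $\sig(C) = \overline{T}\rightarrow D$ then fixes the field binders to be exactly $\overline{x:T}$ with types $\overline{T}$. Since $C \in \ctrs{D}$ also entails $D \in \dt$, this is precisely the configuration in which \textsc{Ctr2Gen} fires during the forward translation of the surrounding program.

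Next I would inspect what \textsc{Ctr2Gen} produces. The rule rewrites $\ctr{C}{D}$ into $\classf{C}{x}{T}{D}{\overline{Fun}}$, i.e. $\kwclass\; C(\overline{x:T})\;\kwimplements\;D\;\{\overline{Fun}\}$: it synthesises the method bodies $\overline{Fun}$ from the consumers of $D$ but leaves the field binder list $\overline{x:T}$ untouched. Hence in the translated program $\defs(C)' = \class{C}{D}{Fun}$ is a generator definition implementing $D$, and $D$ itself has become an interface by \textsc{Dt2It}. Re-running the preprocessing pass that computes $\Delta'$ on the translated program therefore records $C$ in the generator set rather than the constructor set, giving $C \in \generator{D}'$.

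It then remains to compute the new signature. Applying the generator signature rule to $\defs(C)' = \class{C}{D}{Fun}$, whose field list is the unchanged $\overline{x:T}$ with types $\overline{T}$, yields $\sig(C)' = \overline{T}\rightarrow D$, matching the original. This discharges both conjuncts of the conclusion.

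The only genuine obstacle is to make precise the passage from the program-level translation relation to the context-level relation $\Delta \leadsto \Delta'$: I would need to confirm that the bookkeeping defining $\Delta'$ re-classifies $C$ as a generator of $D$ and transmits its field binders verbatim through \textsc{Ctr2Gen}. Once that correspondence is established, everything else is a routine unfolding of the signature and context-lookup definitions summarised in \autoref{gctx}.
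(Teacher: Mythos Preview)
Your proposal is correct and follows the same approach as the paper, which dispatches the lemma in a single line: ``Trivial by \textsc{Ctr2Gen}.'' You have simply unpacked what that one line entails---inverting the signature rule, noting that \textsc{Ctr2Gen} preserves the field binders $\overline{x:T}$, and re-running the preprocessing to reclassify $C$---which is more detail than the paper deems necessary but is in no way a different argument.
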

\begin{proof}
  Trivial by \textsc{Ctr2Gen}.
\end{proof}

\begin{lemma}[Generator to constructor]\label{lemma:gen2ctr} if $\sig(C) = \overline{T} \rightarrow D$ and $C \in \generator{D}$ 
  then, after the translation, $\sig(C)' = \overline{T} \rightarrow D$ and $C \in \ctrs{D}'$. 
\end{lemma}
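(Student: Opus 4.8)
The plan is to mirror the proof of Lemma~\ref{lemma:ctr2gen}, since this statement is its exact dual: there we moved a constructor to a generator via \textsc{Ctr2Gen}, and here we move a generator to a constructor via \textsc{Gen2Ctr}. First I would invoke the translation rule \textsc{Gen2Ctr}, whose premise $D \in \ct$ is discharged by the hypothesis $C \in \generator{D}$ (a generator only exists for an interface, so $D$ is necessarily an interface). The rule rewrites the definition $\class{C}{D}{Fun}$ into $\ctr{C}{D}$, that is, it turns $\kwclass\; C(\overline{x:T})\; \kwimplements\; D\; \{ Fun \}$ into $\kwcase\; C(\overline{x:T})\; \kwextends\; D$, discarding the method bodies $Fun$ but leaving the constructor name $C$, the field declarations $\overline{x:T}$, and the type $D$ untouched.

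Next I would establish the two conclusions from this syntactic output. For the signature, the rules defining $\sig(N)=T$ (given in the additional formalization) compute $\sig(C)=\overline{T}\rightarrow D$ identically from a generator definition $\class{C}{D}{Fun}$ and from a constructor definition $\ctr{C}{D}$, since in both cases the signature is read off from the field types $\overline{T}$ and the name $D$. Because \textsc{Gen2Ctr} preserves both $\overline{x:T}$ and $D$, we obtain $\sig(C)' = \overline{T}\rightarrow D$ in the translated context. For the membership claim, the context update $\Delta \leadsto \Delta'$ records the translated definition $\ctr{C}{D}$ as a constructor of the datatype $D$, so $C \in \ctrs{D}'$ follows directly from how $\Delta'$ is computed from the translated program.

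I do not expect a genuine obstacle here; the result is trivial by \textsc{Gen2Ctr}, exactly as Lemma~\ref{lemma:ctr2gen} was trivial by \textsc{Ctr2Gen}. The only point deserving a moment of care is bookkeeping rather than mathematics: confirming that the preprocessing of the translated program relocates $C$ from $\generator{D}$ to $\ctrs{D}$ (so that $C \in \generator{D}$ becomes $C \in \ctrs{D}'$ rather than being lost), and that the field declarations are copied verbatim so the signature is genuinely preserved. Both are immediate from inspecting the shape of $\ctr{C}{D}$ produced by the rule, so the proof collapses to a single appeal to \textsc{Gen2Ctr}.
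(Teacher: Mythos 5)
Your proof is correct and takes exactly the same route as the paper, which simply states the result is trivial by \textsc{Gen2Ctr}; your additional bookkeeping (that the signature rules read $\sig(C)=\overline{T}\rightarrow D$ identically off $\class{C}{D}{Fun}$ and $\ctr{C}{D}$, and that preprocessing the translated program places $C$ in $\ctrs{D}'$) just makes the ``trivial'' explicit.
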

\begin{proof}
  Trivial by \textsc{Gen2Ctr}.
\end{proof}

\begin{lemma}[Destructor to consumer]\label{lemma:dtr2csm} 
  if $f \in \dtr{D}$ and $\dtrType(f,D) = \overline{T} \rightarrow T$ 
  then, after the translation, $f \in \consumer{D}'$ and $\sig(f)' = D \rightarrow \overline{T} \rightarrow T$.
\end{lemma}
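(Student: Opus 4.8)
The plan is to mirror the structure of Lemmas~\ref{lemma:ctr2gen} and~\ref{lemma:gen2ctr}: the claim should fall out almost immediately by inspecting the single rule that drives the interface-to-datatype direction, namely \textsc{It2Dt}, together with the destructor-to-consumer subderivations it invokes. First I would observe that, since $f \in \dtr{D}$, the name $D$ must be an interface, so $D \in \ct$ and \textsc{It2Dt} is the rule that fires on the definition $\interface{D}{Dtr}$. This rule rewrites the interface into $\sealed{D}$ followed by a sequence of consumers $\overline{Csm}$, one per destructor, each produced by a premise of the form $\dtrToCsmCtx{\kwthis}{D}{Dtr}{Csm}$. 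In particular $D$ becomes a datatype and the destructor named $f$ is turned into a consumer definition, so immediately $f \in \consumer{D}'$ after the translation.

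The one place requiring attention is a case split on whether $f$ carries a default implementation. If $f$ is a bare declaration $\dec{x}$ (no body), its subderivation uses \textsc{Dec2Csm}, yielding the consumer $\csm{\overline{\case P e}}$; if $f$ has a default body, i.e. it is of the form $\dec{x} = e$, its subderivation uses \textsc{Fun2Csm}, yielding a consumer whose clause list is extended with a wildcard case. In both cases the emitted definition has the shape $\kwdef\ f(\kwself:D)(\overline{x:T}):T = \ldots$, because both rules prepend a fresh $(\kwself:D)$ parameter list in front of the original argument list $(\overline{x:T})$ and keep the return type $T$ unchanged. Crucially, the argument types $\overline{T}$ and return type $T$ are copied verbatim from the declaration, which by the hypothesis $\dtrType(f,D) = \overline{T} \rightarrow T$ are precisely these types; the body translation carried out by the $\transCtx{\overline{x:T}}{e}{T}{e'}$ premise in \textsc{Fun2Csm} rewrites only the expression and leaves the signature alone.

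With the consumer definition of $f$ in hand, the conclusion $\sig(f)' = D \rightarrow \overline{T} \rightarrow T$ follows by a single application of the consumer signature rule of \autoref{sec:additional-formalization} (the one deriving $\sig(f) = D \rightarrow \overline{T} \rightarrow T$ from $\defs(f) = \csmcase{e}$). The only genuine obstacle is therefore bookkeeping: checking that the $\overline{T} \rightarrow T$ read off from $\dtrType$ before the translation coincides with the $\overline{T} \rightarrow T$ read off from $\sig$ afterwards, which holds exactly because neither \textsc{Dec2Csm} nor \textsc{Fun2Csm} alters those types. Everything else is a direct appeal to the relevant rule, as in the two preceding lemmas.
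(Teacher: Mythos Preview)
Your argument is correct and follows essentially the same route as the paper, which simply records ``Trivial by \textsc{It2Dt} and \textsc{Dec2Csm}.'' Your version is in fact more careful: you additionally treat the \textsc{Fun2Csm} case for destructors with default bodies, which the paper's one-line proof elides but which is needed for the claim to cover all destructors.
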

\begin{proof}
  Trivial by \textsc{It2Dt} and \textsc{Dec2Csm}.
\end{proof}

\begin{lemma}[Consumer to destructor]\label{lemma:csm2dtr} 
  if $f \in \consumer{D}$ and $\sig(f) = D \rightarrow \overline{T} \rightarrow T$\\
  then, after the translation, $f \in \dtr{D}'$ and $\dtrType(f,D)' = \overline{T} \rightarrow T$. 
\end{lemma}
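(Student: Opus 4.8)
The plan is to mirror the proof of Lemma~\ref{lemma:dtr2csm}, which is its exact dual, treating this as a rule-directed inspection of the functional-to-object-oriented translation rules in Figure~\ref{fig:trans-fp}. The hypothesis supplies a datatype $D \in \dt$ carrying a consumer $f \in \consumer{D}$ with $\sig(f) = D \rightarrow \overline{T} \rightarrow T$. First I would observe that when $D \in \dt$, the only rule that fires on the declaration $\sealed D$ is \textsc{Dt2It}, which rewrites $D$ into an interface $\interface D {Dtr}$ and, crucially, translates every consumer $f \in \consumer{D}$ into a destructor via the premise $\Delta; \Gamma \vdash \defs(f) \leadsto Dtr$. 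Since $D$ is now an interface in the post-translation context, $f$ sits among its destructors, which establishes the first conjunct $f \in \dtr{D}'$.

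Next I would compute the signature of that destructor by a short case analysis on whether $f$'s body contains a wildcard clause. If it does, rule \textsc{Csm2Fun} applies and yields $\dec x = [\kwself \mapsto \kwthis]e'$; if it does not, rule \textsc{Csm2Dec} applies and yields the bare declaration $\dec x$. In both branches the declaration part is $\dec x$, i.e. the original consumer signature with the leading $\kwself : D$ parameter list removed. Feeding this declaration through the definition of $\dtrType$ given in Appendix~\ref{sec:additional-formalization}, which reads the argument and return types off the declaration found inside the interface, delivers $\dtrType(f,D)' = \overline{T} \rightarrow T$, as required.

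There is essentially no hard step here: the result falls directly out of the shape of \textsc{Dt2It} together with \textsc{Csm2Fun} and \textsc{Csm2Dec}, exactly paralleling the way Lemma~\ref{lemma:dtr2csm} follows from \textsc{It2Dt} and \textsc{Dec2Csm}. The only point demanding a moment's care is the case split on the presence of a wildcard pattern, and verifying that \emph{both} branches drop precisely the $\kwself : D$ parameter list while leaving $\overline{T} \rightarrow T$ intact; once that observation is confirmed, the two conjuncts $f \in \dtr{D}'$ and $\dtrType(f,D)' = \overline{T} \rightarrow T$ follow immediately, so I expect the written proof to be a one- or two-line appeal to these rules in keeping with the preceding lemmas.
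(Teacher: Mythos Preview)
Your proposal is correct and follows essentially the same approach as the paper, which simply writes ``Trivial by \textsc{Dt2It} and \textsc{Csm2Dec}.'' Your version is in fact slightly more thorough: you explicitly perform the case split on the presence of a wildcard clause and cover the \textsc{Csm2Fun} branch as well, whereas the paper's one-line proof mentions only \textsc{Csm2Dec}; both branches indeed produce the same declaration shape, so nothing is lost either way.
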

\begin{proof}
  Trivial by \textsc{Dt2It} and \textsc{Csm2Dec}.
\end{proof}

\begin{lemma}[Global context translation]\label{lemma:context} if $\Delta$ is the global context collected through the preprocessing of $P$, and $\Delta'$ is the context collected through the preprocessing of $P'$ such that $\trans P T P'$, then: $\textsc{It} = \textsc{Dt}'$ and $\textsc{Dt} = \textsc{It}'$
and $\forall D\in \textsc{It}.\textsc{Dtr}(D)=\textsc{Csm}(D)' \wedge \textsc{Gen}(D)=\textsc{Ctr}(D)'$ and $\forall D\in \textsc{Dt}.\textsc{Csm}(D)=\textsc{Dtr}(D)' \wedge \textsc{Ctr}(D)=\textsc{Gen}(D)'$. We write $\Delta \leadsto \Delta'$.
\end{lemma}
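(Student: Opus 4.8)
The plan is to prove this by induction on the structure of the program $P$, viewed as a sequence of top-level definitions followed by a final expression, which coincides with induction on the derivation of $\trans P T P'$ since every core rule recurses on the program tail $L$. The preprocessing that produces $\Delta$ (resp.\ $\Delta'$) merely classifies each definition of $P$ (resp.\ $P'$) into the categories of \autoref{gctx}, so the lemma reduces to checking that, definition by definition, the category assigned when scanning $P'$ is exactly the swap of the category assigned when scanning $P$. Lemmas~\ref{lemma:ctr2gen}--\ref{lemma:csm2dtr} already supply the per-entity correspondences together with the matching signatures, so the remaining work is to aggregate them and to establish the two set equalities $\ct = \dt'$ and $\dt = \ct'$.

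First I would fix the inductive hypothesis on the tail: $\trans L T L'$ implies $\Delta_L \leadsto \Delta_L'$, where $\Delta_L$ and $\Delta_L'$ are the contexts obtained from $L$ and $L'$. For the head definition I would case-split on the applicable transformation rule. When the head is an interface, \textsc{It2Dt} rewrites $\interface{D}{Dtr}$ into $\sealed{D}$ together with a block of consumers; scanning the result moves $D$ from $\ct$ into $\dt'$, turns each destructor of $\dtr{D}$ into a member of $\consumer{D}'$ (Lemma~\ref{lemma:dtr2csm}), while \textsc{Gen2Ctr} turns each generator of $\generator{D}$ into a member of $\ctrs{D}'$ (Lemma~\ref{lemma:gen2ctr}). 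Symmetrically, when the head is a datatype, \textsc{Dt2It} moves $D$ from $\dt$ into $\ct'$ and folds its consumers into destructors (Lemma~\ref{lemma:csm2dtr}), \textsc{Ctr2Gen} turns constructors into generators (Lemma~\ref{lemma:ctr2gen}), and \textsc{CsmElim} removes the now-redundant standalone consumer definitions. Combining the head contribution with the inductive hypothesis on the tail yields the four required conjuncts for the whole program.

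The main obstacle I anticipate is that $\Delta'$ is produced by an \emph{independent} preprocessing pass over $P'$ rather than being threaded through the transformation relation, so I must verify that the raw syntactic output of each rule lands in the intended category. The delicate point is the migration of destructors and consumers between two different syntactic homes: in object-oriented decomposition destructors live \emph{inside} the interface declaration, whereas in functional decomposition consumers are \emph{top-level} definitions. Rules \textsc{It2Dt}, \textsc{Dt2It}, and \textsc{CsmElim} together relocate these declarations, and I would need to argue carefully that after relocation the preprocessing classifies exactly the same names under $\consumer{D}'$ (resp.\ $\dtr{D}'$) with no spurious additions or omissions; in particular that the wildcard and default cases handled by \textsc{Fun2Csm} and \textsc{Csm2Dec} alter only the \emph{bodies} of these declarations, never the \emph{set} of names. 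Once this bookkeeping is pinned down, the set equalities and the membership correspondences follow directly from the cited lemmas.
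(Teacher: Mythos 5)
Your proposal is correct and follows essentially the same route as the paper, which simply declares the lemma ``trivial by \textsc{It2Dt}, \textsc{Dt2It} and Lemmas~\ref{lemma:ctr2gen}--\ref{lemma:csm2dtr}''; you merely spell out the induction on the definition sequence and the bookkeeping (including the relocation of destructors/consumers and the role of \textsc{CsmElim}) that the paper leaves implicit.
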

\begin{proof}
Trivial by \textsc{It2Dt}, \textsc{Dt2It} and Lemmas~\ref{lemma:ctr2gen}-\ref{lemma:csm2dtr}.
\end{proof}

%% Next, we provide the proof for \autoref{theorem:tp}.

%% \begin{proof}
%%   By induction.

%% \begin{itemize}
%% \item Case \appCtr

%% \begin{longtable}[l]{ll}
%% $\sig(C)' = \overline{T} \rightarrow D$ and $C \in \generator{D}'$ & By \autoref{lemma:ctr2gen}\\
%% $\overline{\trans{e'}{T}{e}}$ &  By i.h.\\
%% $\trans{\new C {e'}}{D}{C(\overline{e})}$ & By \textsc{New2Obj}\\
%% \end{longtable}

%% \item Case \newGen

%% \begin{longtable}[l]{ll}
%% $\sig(C)' = \overline{T} \rightarrow D$ and $C \in \ctrs{D}'$  & By \autoref{lemma:gen2ctr} \\
%% $\overline{\trans{e'}{T}{e}}$ &  By i.h.\\
%% $\trans{C(\overline{e'})}{D}{\new C {e}}$ & By \textsc{Obj2New}\\
%% \end{longtable}
%% \newpage
%% \item Case \selDtr

%% \begin{longtable}[l]{ll}
%% $f \in \consumer{D}'$ and $\sig(f)' = D \rightarrow \overline{T} \rightarrow T$ & By \autoref{lemma:dtr2csm}\\
%% $\trans{e_1'}{T}{e_1}$ &  By i.h.\\
%% $\overline{\trans{e_2'}{T}{e_2}}$ &  By i.h.\\
%% $\trans{f(e_1')(\overline{e_2'})}{T}{e_1.f(\overline{e_2})}$ & By \textsc{App2Sel}\\
%% \end{longtable}

%% \item Case \appCsm

%% \begin{longtable}[l]{ll}
%% % \overline{\trans {e_2} T {e_2'}}
%% $\trans {e_1'} D {e_1}$ & By i.h. \\
%% $\overline{\trans {e_2'} T {e_2}}$ & By i.h.\\
%% $f \in \dtr{D}'$ and $\dtrType(f,D) = \overline{T} \rightarrow T$ & By \autoref{lemma:csm2dtr}\\
%% $\trans{f(e_1')(\overline{e_2'})}{T}{e_1.f(\overline{e_2})}$ & By \textsc{Sel2App}\\
%% \end{longtable}
%% \end{itemize}
%% \end{proof}

\cristina{Modified a bit the next two lemmas.}
\begin{lemma}[Destructor to consumer case] \label{lemma:csmBody} if $C \in \generator{D}$, $f \in \dtr{D}$, $\dtrBody(f,C) = (\overline{y},\overline{x},e)$ and  $\trans e T {e'}$
  then, after the translation, $C \in \ctrs{D}'$, $f \in \consumer{D}'$ and $\csmBody(f,C) = (\overline{y},\overline{x},[\kwthis \mapsto \kwself]e')$.
\end{lemma}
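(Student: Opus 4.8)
The plan is to prove Lemma~\ref{lemma:csmBody} by tracing how the translation rules act on a single destructor defined in a generator, and confirming that the resulting consumer case recovers exactly the same field names, argument names, and body (up to the $\kwself/\kwthis$ substitution). The hypotheses tell us $C \in \generator{D}$ and $f \in \dtr{D}$, so the relevant rule for the datatype/interface direction is \textsc{It2Dt}, which turns the interface $D$ into a datatype and each destructor into a consumer. First I would invoke \autoref{lemma:context} (equivalently Lemmas~\ref{lemma:ctr2gen}--\ref{lemma:csm2dtr}) to establish the ``membership'' parts of the conclusion: since $C \in \generator{D}$ and $D \in \ct$, we get $C \in \ctrs{D}'$ and $f \in \consumer{D}'$ after translation. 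This disposes of the structural bookkeeping and leaves the real content, namely the equality of the triple $\csmBody(f,C) = (\overline{y},\overline{x},[\kwthis \mapsto \kwself]e')$.

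Next I would unfold the definition of $\dtrBody(f,C)$. By \autoref{fig:compute-delta}, the hypothesis $\dtrBody(f,C) = (\overline{y},\overline{x},e)$ arises from one of the two rules \textsc{DtrGen} or \textsc{DtrIt}. The interesting case is \textsc{DtrGen}, where $\defs(C) = \classf C y T D {\overline{Fun}}$ and $\fun x e \in \overline{Fun}$, so the field names are $\overline{y}$, the destructor arguments are $\overline{x}$, and the body is $e$. I would then match this against the translation of $C$: since $C$ is a generator being turned into a constructor, rule \textsc{Fun2Case} (applied via \textsc{Fun2Csm}/\textsc{Dec2Csm} inside \textsc{It2Dt}) produces the case clause $\case{C(\overline{y})}{[\kwthis \mapsto \kwself]e'}$ where $\transCtx{\overline{y:S},\overline{x:T}} e T e'$. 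On the consumer side, \textsc{CsmCtr} in \autoref{fig:compute-delta} reads off $\csmBody(f,C)$ from exactly this case clause: the constructor field names $\overline{y}$, the second-list parameter names $\overline{x}$, and the right-hand side $[\kwthis \mapsto \kwself]e'$. Matching the three components gives precisely the claimed tuple.

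The one point requiring care is the body component, and this is where I expect the main obstacle to lie. The hypothesis gives a translation $\trans e T {e'}$ of the \emph{bare} body $e$, whereas the translation rule \textsc{Fun2Case} performs the translation under an extended context $\overline{y:S},\overline{x:T}$ and then applies the substitution $[\kwthis \mapsto \kwself]$. I would need to argue that the $e'$ appearing in $\csmBody$ agrees with the $e'$ from the lemma's hypothesis $\trans e T {e'}$ --- i.e.\ that the translation of the body is insensitive to the additional field/argument bindings in the context (they only add variable typings, and the expression-translation rules \textsc{Sel2App}, \textsc{New2Obj}, \textsc{Obj2New}, etc.\ depend only on the \emph{types} of subexpressions, which these bindings merely supply). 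Granting that coherence of the expression translation, the substitution $[\kwthis \mapsto \kwself]$ is applied identically on both sides, so the bodies coincide. I would therefore conclude by stating that the tuple produced by \textsc{CsmCtr} is $(\overline{y},\overline{x},[\kwthis \mapsto \kwself]e')$, matching the goal, and remark that the \textsc{DtrIt} case (where $\overline{y} = \varnothing$) follows by the same argument specialized to an empty field list.
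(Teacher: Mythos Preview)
Your handling of the \textsc{DtrGen} case is essentially the paper's argument: invoke \textsc{Fun2Case} to obtain the clause $\case{C(\overline{y})}{[\kwthis\mapsto\kwself]e'}$, then read off the triple via \textsc{CsmCtr}. The paper does exactly this (in two lines), and your extra care about the typing context for $e'$ is a reasonable elaboration of something the paper simply glosses over.

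The gap is in your treatment of the \textsc{DtrIt} case. It is \emph{not} ``the same argument specialized to an empty field list.'' In \textsc{DtrIt}, the body $e$ is the default implementation taken from the \emph{interface} $D$, and the generator $C$ does not contain a definition of $f$ in its $\overline{Fun}$ at all---that is precisely why \textsc{DtrIt} fires rather than \textsc{DtrGen}. Consequently \textsc{Fun2Case} does not apply (its premise $\dec x = e \in \overline{Fun}$ fails), and there is no $\case{C(\overline{y})}{\ldots}$ clause for \textsc{CsmCtr} to match. Instead, the translation of the destructor with default body goes through \textsc{Fun2Csm}, which appends a wildcard clause $\case{\_}{[\kwthis\mapsto\kwself]e'}$ to the consumer; on the translated side one then invokes \textsc{CsmDt} (not \textsc{CsmCtr}) to extract $(\varnothing,\overline{x},[\kwthis\mapsto\kwself]e')$. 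The paper handles the two cases by this distinct pair of rule applications (\textsc{Fun2Csm}/\textsc{CsmDt} versus \textsc{Fun2Case}/\textsc{CsmCtr}), and you should do the same rather than collapsing them.
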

\begin{proof}
By induction. 
\begin{itemize}
\item Case \dtrGen.

\begin{longtable}[l]{ll}
$\dtrToCase{\classf {C} {y} {T} {D} {\overline{Fun}}} {\case {C(\overline{y})} [\kwthis \mapsto \kwself]e'}$ & By \textsc{Fun2Case}\\
% $\case{C(\overline{y})}{e'} \in \overline{\case P e}$ & By Dec2Csm\\
% $\case{C(\overline{y})}{[\kwthis \mapsto x]e'}$ & By Fun2Case\\
% $\dtrToCase {\classf C y D {\overline{Fun}}} {\case{C(\overline{y})}{[\kwthis \mapsto x]e'}}$ & By Fun2Case \\
% $C \in \generator{D}$ & Given\\
$\csmBody(f,C) = (\overline{y},\overline{x},[\kwthis \mapsto \kwself]e') $ & By \textsc{CsmCtr}
\end{longtable}

\item Case \dtrIt:\\
\begin{longtable}[l]{ll}
  $\dtrToCsm {\dec x = e}$\\
  \qquad\qquad $\csmcase e;\case{\_}{[\kwthis \mapsto \kwself]e'}$ & By \textsc{Fun2Csm}\\
$\csmBody(f,C) = (\varnothing, \overline{x},{[\kwthis \mapsto \kwself]e'})$ & By \textsc{CsmDt}\\
\end{longtable}
\end{itemize}
\end{proof}

\begin{lemma}[Consumer to destructor case] \label{lemma:dtrBody} if $C \in \ctrs{D}$, $f \in \consumer{D}$, $\csmBody(f,C) = (\overline{y},\overline{x},e)$ and $\trans e T e'$
  then, after the translation, $C \in \generator{D}'$, $f \in \dtr{D}'$,  $\dtrBody(f,C) = (\overline{y},\overline{x},[\kwself\mapsto \kwthis]e')$
\end{lemma}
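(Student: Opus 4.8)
The plan is to prove this by mirroring the argument for \autoref{lemma:csmBody}, the dual statement, proceeding by case analysis on the derivation of $\csmBody(f,C) = (\overline{y},\overline{x},e)$. Because $f \in \consumer{D}$, the judgment $\csmBody(f,C)$ is concluded by exactly one of the two rules \textsc{CsmCtr} and \textsc{CsmDt}, depending on whether the constructor $C$ is matched by a dedicated case clause or only by a trailing wildcard. The two side conditions $C \in \generator{D}'$ and $f \in \dtr{D}'$ are immediate from \autoref{lemma:ctr2gen} and \autoref{lemma:csm2dtr}, so the substance of the proof is to show that, after translation, the recovered destructor body coincides with $[\kwself \mapsto \kwthis]e'$ in each case.

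In the \textsc{CsmCtr} case the clause $\case{C(\overline{y})}{e}$ occurs in $f$, so $C$ is matched explicitly. I would use the generator-producing rule \textsc{Ctr2Gen}, whose premise runs $\vdash_C$ over every consumer of $D$; for the clause matching $C$ this is discharged by \textsc{Case2Fun}. Feeding in the hypothesis $\trans e T e'$, rule \textsc{Case2Fun} emits the function $\dec x = [\kwself \mapsto \kwthis]e'$ into the generator $C$, with the pattern variables $\overline{y}$ becoming $C$'s fields. Reading the destructor body back with \textsc{DtrGen} --- which collects the generator's own fields $\overline{y}$, the declared parameters $\overline{x}$, and the function body --- yields exactly $\dtrBody(f,C) = (\overline{y},\overline{x},[\kwself \mapsto \kwthis]e')$, as required.

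The \textsc{CsmDt} case is the wildcard, inheritance case, where $C$ is covered by $\case{\_}{e}$ and hence $\overline{y} = \varnothing$. The observation I rely on is that a generator receives an explicit definition of $f$ only through \textsc{Case2Fun}, which fires solely for an explicit $C$-clause; with only a wildcard present, $C$'s generator does not override $f$. Instead the wildcard body is turned into a \emph{default implementation} on the interface $D$ by \textsc{Csm2Fun} (as a premise of \textsc{Dt2It}), again of the form $\dec x = [\kwself \mapsto \kwthis]e'$. Looking the destructor up with \textsc{DtrIt}, which falls through to the interface default when the generator does not define $f$, then gives $\dtrBody(f,C) = (\varnothing,\overline{x},[\kwself \mapsto \kwthis]e')$.

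The step I expect to be the main obstacle is not the substitution arithmetic but the bookkeeping over \emph{where} the translated destructor lives: I must argue that the \textsc{CsmDt} case genuinely triggers \textsc{DtrIt} rather than \textsc{DtrGen}. This hinges on \textsc{Case2Fun} being inapplicable to a constructor that appears only under the wildcard, so such a $C$ contributes no overriding function to its generator and truly inherits the interface default produced by \textsc{Csm2Fun}. I would make this precise through the way \textsc{Ctr2Gen} iterates $\vdash_C$ across consumers, and use \autoref{lemma:context} to confirm that the source and target global contexts align (in particular that $f \in \dtr{D}'$ with its default sitting on $D$). A final point of care is that the $[\kwself \mapsto \kwthis]$ introduced here must be the exact inverse of the $[\kwthis \mapsto \kwself]$ appearing in \autoref{lemma:csmBody}, since it is this pairing that lets the two lemmas compose into the round-trip guarantee underlying \autoref{theorem:tp}.
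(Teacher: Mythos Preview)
Your proposal is correct and follows essentially the same approach as the paper: case analysis on the derivation of $\csmBody(f,C)$, handling \textsc{CsmCtr} via \textsc{Case2Fun} followed by \textsc{DtrGen}, and \textsc{CsmDt} via \textsc{Csm2Fun} followed by \textsc{DtrIt}. Your additional remarks about the side conditions (via Lemmas~\ref{lemma:ctr2gen} and~\ref{lemma:csm2dtr}) and the bookkeeping that ensures \textsc{DtrIt} rather than \textsc{DtrGen} fires in the wildcard case are more explicit than the paper's terse proof, but the underlying argument is the same.
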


\begin{proof}
% $\defs(f) = \csmcase e$\\
By induction. 
\begin{itemize}
\item Case \csmCtr
\begin{longtable}[l]{ll}
$\csmToDtr { \csmcase e} {\dec x = [ \kwself \mapsto \kwthis]e'}$ & By \textsc{Case2Fun}\\
$\dtrBody(f,C) = (\overline{y}, \overline{x},{[\kwself \mapsto \kwthis]e'})$ & By \textsc{DtrGen}\\
\end{longtable}
\item Case \csmDt
\begin{longtable}[l]{ll}
$\Delta; \Gamma \vdash \csmcase e  \leadsto \dec x = [ \kwself \mapsto \kwthis]e'$ & By \textsc{Csm2Fun}\\
$\dtrBody(f,C) = (\varnothing, \overline{x},{[\kwself \mapsto \kwthis]e'})$ & By \textsc{DtrIt}\\
\end{longtable}
\end{itemize}
\end{proof}

  \subsection{Proof for Theorem~\ref{theorem:preservation}}

  \begin{proof}
By rule induction on the given typing derivation.
%\todo{In the proof I mention standard substitution rules. I don't think we need to actually provide them..}

\begin{itemize}
%\item \todo{Add rule for $x$?} 
\item Case {\sc Sel2App}. Given $e_1.f(\overline{e_2})$, either {\sc E-Congr} or {\sc E-Dtr} apply.
  \begin{enumerate}
  \item If {\sc E-Congr} applies, then either $\eval {e_1} {e_1'}$, or, there exists some
  $e_2$ in $\overline{e_2}$ such that $\eval {e_2} {e_2'}$.
  \begin{itemize}
  \item For the first case, by the induction hypothesis, if $\Delta; \boldsymbol{\cdot} \vdash e_1 : T_1 \leadsto \_$, then
  $\Delta; \boldsymbol{\cdot} \vdash e_1' : T_1 \leadsto \_$. Consequently, it must be the case that if $\Delta; \boldsymbol{\cdot} \vdash e_1.f(\overline{e_2}) : T \leadsto \_$,
  then $\Delta; \boldsymbol{\cdot} \vdash e_1'.f(\overline{e_2}) : T \leadsto \_$.
  \item Similarly, for the second case, if $\Delta; \boldsymbol{\cdot} \vdash e_2 : T_2 \leadsto \_$, then
  $\Delta; \boldsymbol{\cdot} \vdash e_2' : T_2 \leadsto \_$ by the induction hypothesis. Consequently, if $\Delta; \boldsymbol{\cdot} \vdash e_1.f(\overline{e_2}) : T \leadsto \_$,
  then $\Delta; \boldsymbol{\cdot} \vdash e_1.f(\overline{e_2'}) : T \leadsto \_$. 
  \end{itemize}
  \item If {\sc E-Dtr} applies, then the conclusion results from {\sc Fun2Csm}, destructor lookup and standard substitution rules.\\
  \end{enumerate}
\item Case {\sc App2Sel}. Given $f(e_1)(\overline{e_2})$, either {\sc E-Congr} or {\sc E-Csm} apply.
  \begin{enumerate}
  \item If {\sc E-Congr} applies, then either $\eval {e_1} {e_1'}$, or, there exists some
  $e_2$ in $\overline{e_2}$ such that $\eval {e_2} {e_2'}$.
  \begin{itemize}
  \item For the first case, by the induction hypothesis, if $\Delta; \boldsymbol{\cdot} \vdash e_1 : T_1 \leadsto \_$, then
  $\Delta; \boldsymbol{\cdot} \vdash e_1' : T_1 \leadsto \_$. Consequently, it must be the case that if $\Delta; \boldsymbol{\cdot} \vdash f(e_1)(\overline{e_2}) : T \leadsto \_$,
  then $\Delta; \boldsymbol{\cdot} \vdash f(e_1')(\overline{e_2}) : T \leadsto \_$.
  \item Similarly, for the second case, if $\Delta; \boldsymbol{\cdot} \vdash e_2 : T_2 \leadsto \_$, then
  $\Delta; \boldsymbol{\cdot} \vdash e_2' : T_2 \leadsto \_$ by the induction hypothesis. Consequently, if $\Delta; \boldsymbol{\cdot} \vdash f(e_1)(\overline{e_2}) : T \leadsto \_$,
  then $\Delta; \boldsymbol{\cdot} \vdash f(e_1)(\overline{e_2'}) : T \leadsto \_$. 
  \end{itemize}
  \item If {\sc E-Csm} applies, then the conclusion results from {\sc Csm2Fun}, consumer lookup and standard substitution rules.\\
  \end{enumerate}
\item Case {\sc Obj2New}. Given $C(\overline{e})$, either {\sc E-Congr} or {\sc E-Ctr} apply.
\begin{enumerate}
\item If {\sc E-Congr} applies, then there exists some $e$ in $\overline{e}$ such that $\eval {e} {e'}$. By the induction hypothesis,
  if $\Delta; \boldsymbol{\cdot} \vdash e : T_1 \leadsto \_$, then
  $\Delta; \boldsymbol{\cdot} \vdash e' : T_1 \leadsto \_$. Consequently, it must be the case that if $\Delta; \boldsymbol{\cdot} \vdash C(\overline{e}) : T \leadsto \_$,
  then $\Delta; \boldsymbol{\cdot} \vdash C(\overline{e'}) : T \leadsto \_$.
\item If {\sc E-Ctr}, then the conclusion follows from the typing rule {\sc Obj}, where we ignore the translation part (as no translation is needed for $obj$):

\begin{mathpar}  
\inferrule[Obj]
{ \sig(C) = \overline{T} \rightarrow D }
{ \trans{obj(C,\overline{v})}{D}{\_}}
\end{mathpar}

\end{enumerate}  

\item Case {\sc New2Obj}. Given $\newSeq{C}{e}$, either {\sc E-Congr} or {\sc E-New} apply.
\begin{enumerate}
\item If {\sc E-Congr} applies, then there exists some $e$ in $\overline{e}$ such that $\eval {e} {e'}$. By the induction hypothesis,
  if $\Delta; \boldsymbol{\cdot} \vdash e : T_1 \leadsto \_$, then
  $\Delta; \boldsymbol{\cdot} \vdash e' : T_1 \leadsto \_$. Consequently, it must be the case that if $\Delta; \boldsymbol{\cdot} \vdash \newSeq{C}{e} : T \leadsto \_$,
  then $\Delta; \boldsymbol{\cdot} \vdash \newSeq{C}{e'} : T \leadsto \_$.
\item If {\sc E-New}, then the conclusion follows from the typing rule {\sc Obj} above.
\end{enumerate}  
\end{itemize}
\end{proof}

  \subsection{Proof for Theorem~\ref{theorem:progress}}

  \begin{proof}
By rule induction on the given typing derivation.
%\todo{In the proof I mention standard substitution rules. I don't think we need to actually provide them..}

\begin{itemize}
%\item \todo{Add rule for $x$?} 
\item Case {\sc Sel2App}. Given $e_1.f(\overline{e_2})$, either  {\sc E-Congr} applies, or $e_1 = obj(C,\overline{v_1})$ and $\overline{e_2} = \overline{v_2}$. By {\sc E-Dtr}, the latter scenario evaluates to $[\kwthis \mapsto obj(C,\overline{v_1}),\overline{y} \mapsto \overline{v_1}, \overline{x} \mapsto \overline{v_2} ]e_3$, where $\dtrBody(f,C) = (\overline{y},\overline{x},e_3)$. By the induction hypothesis and standard substitution rules $[\kwthis \mapsto obj(C,\overline{v_1}),\overline{y} \mapsto \overline{v_1}, \overline{x} \mapsto \overline{v_2} ]e_3$ either evaluates to $e_3'$, or it's a value. \\
\item Case {\sc App2Sel}. Given $f(e_1)(\overline{e_2})$, either {\sc E-Congr} applies, or $e_1 = obj(C,\overline{v_1})$ and $\overline{e_2} = \overline{v_2}$. By {\sc E-Csm}, the latter scenario evaluates to $[\kwself \mapsto obj(C,\overline{v_1}),\overline{y} \mapsto \overline{v_1}, \overline{x} \mapsto \overline{v_2} ]e_3$, where $\csmBody(f,C) = (\overline{y},\overline{x},e_3)$. By the induction hypothesis and standard substitution rules $[\kwself \mapsto obj(C,\overline{v_1}),\overline{y} \mapsto \overline{v_1}, \overline{x} \mapsto \overline{v_2} ]e_3$ either evaluates to $e_3'$, or it's a value.\\
\item Case {\sc Obj2New}. Given $C(\overline{e})$, either {\sc E-Congr} applies, or $\overline{e}$ is $\overline{v}$. By {\sc E-Ctr}, the latter evaluates to $obj(C,\overline{v})$, which is a value.\\
\item Case {\sc New2Obj}. Given $\newSeq{C}{e}$, either {\sc E-Congr} applies, or $\overline{e}$ is $\overline{v}$. By {\sc E-New}, the latter evaluates to $obj(C,\overline{v})$, which is a value.

\end{itemize}
\end{proof}

\subsection{Proof for \autoref{theorem:tp}}

%% \begin{theorem}[Syntax and type preservation]\label{theorem:tp} if $\trans P T P'$ and $\Delta \leadsto \Delta'$, then $\transGeneric {\Delta'; \Gamma} {P'} T P$. 
%% \end{theorem}

\begin{proof}
  By induction.

\begin{itemize}
\item Case \appCtr

\begin{longtable}[l]{ll}
$\sig(C)' = \overline{T} \rightarrow D$ and $C \in \generator{D}'$ & By Lemma \ref{lemma:ctr2gen}\\
$\overline{\trans{e'}{T}{e}}$ &  By i.h.\\
$\trans{\new C {e'}}{D}{C(\overline{e})}$ & By \textsc{New2Obj}\\
\end{longtable}

\item Case \newGen

\begin{longtable}[l]{ll}
$\sig(C)' = \overline{T} \rightarrow D$ and $C \in \ctrs{D}'$  & By Lemma \ref{lemma:gen2ctr} \\
$\overline{\trans{e'}{T}{e}}$ &  By i.h.\\
$\trans{C(\overline{e'})}{D}{\new C {e}}$ & By \textsc{Obj2New}\\
\end{longtable}

\item Case \selDtr

\begin{longtable}[l]{ll}
$f \in \consumer{D}'$ and $\sig(f)' = D \rightarrow \overline{T} \rightarrow T$ & By Lemma \ref{lemma:dtr2csm}\\
$\trans{e_1'}{T}{e_1}$ &  By i.h.\\
$\overline{\trans{e_2'}{T}{e_2}}$ &  By i.h.\\
$\trans{f(e_1')(\overline{e_2'})}{T}{e_1.f(\overline{e_2})}$ & By \textsc{App2Sel}\\
\end{longtable}

\item Case \appCsm

\begin{longtable}[l]{ll}
% \overline{\trans {e_2} T {e_2'}}
$\trans {e_1'} D {e_1}$ & By i.h. \\
$\overline{\trans {e_2'} T {e_2}}$ & By i.h.\\
$f \in \dtr{D}'$ and $\dtrType(f,D) = \overline{T} \rightarrow T$ & By Lemma \ref{lemma:csm2dtr}\\
$\trans{f(e_1')(\overline{e_2'})}{T}{e_1.f(\overline{e_2})}$ & By \textsc{Sel2App}\\
\end{longtable}
\end{itemize}
\end{proof}

  \subsection{Proof for \autoref{sp}}
\begin{proof}
We'll start with the proof for condition (1). We proceeed by induction on the translation rules.

\begin{itemize}
\item Case {\sc App2Sel}. Given that $P$ doesn't diverge, we must have $e_1 \longrightarrow^* v_1$ and $ \overline{e_2} \longrightarrow^* \overline{v_2}$ (for the latter, we abuse the notation
to mean that each individual $e_2$ evaluates to a corresponding $v_2$). Then, by induction hypothesis
$e_1' \longrightarrow^* v_1$ and $ \overline{e_2'} \longrightarrow^* \overline{v_2}$.
The conclusion follows from Lemma \ref{lemma:dtrBody} and rules {\sc E-Csm} and {\sc E-Dtr} in \autoref{semantics}.
\item Case {\sc Sel2App}. Given that $P$ doesn't diverge, we must have $e_1 \longrightarrow^* v_1$ and $ \overline{e_2} \longrightarrow^* \overline{v_2}$. Then, by induction hypothesis
$e_1' \longrightarrow^* v_1$ and $ \overline{e_2'} \longrightarrow^* \overline{v_2}$.
The conclusion follows from Lemma \ref{lemma:csmBody} and rules {\sc E-Dtr} and {\sc E-Csm} in \autoref{semantics}.
\item Case {\sc Obj2New}. Given that $P$ doesn't diverge, we must have $ \overline{e} \longrightarrow^* \overline{v_1}$. Then, by induction hypothesis
$ \overline{e'} \longrightarrow^* \overline{v_1}$. From rules {\sc E-Ctr} and {\sc E-New} in \autoref{semantics}, it follows that both $C(\overline{v_1})$
and $\newSeq{C}{v_1}$ evaluate to the same value.
\item Case {\sc New2Obj}. Follows exactly the same pattern as the case for {\sc Obj2New}.
\end{itemize}

Regarding the proof for condition (2), if program $P$ can diverges, %because of infinite recursion.
then let $T$ be an infinite trace of $P$. Given the rules in \autoref{semantics} and the fact $P$ contains a finite number of instructions,
one of the following situations must happen:
\begin{itemize}
\item[(i)] There exists a method invocation  $e_1.f(\overline{e_2})$ that appears an infinite number of times in $T$.
According to {\sc Sel2App}, each method invocation $e_1.f(\overline{e_2})$ is translated to a corresponding
function application $f(e_1')(\overline{e_2'})$ in $P'$.
By the induction hypothesis, we have that if
$e_1 \longrightarrow^* v_1$ and $ \overline{e_2} \longrightarrow^* \overline{v_2}$,
then $e_1' \longrightarrow^* v_1$ and $ \overline{e_2'} \longrightarrow^* \overline{v_2}$.
Then, the execution of $P'$ must contain an infinite number of function applications $f(e_1')(\overline{e_2'})$.
\item[(ii)] There exists a function application  $f(e_1)(\overline{e_2})$ that appears an infinite number of times in $T$.
According to {\sc App2Sel}, each function application $f(e_1)(\overline{e_2})$
is translated to a corresponding method invocation $e_1'.f(\overline{e_2'})$ 
 in $P'$.
By the induction hypothesis, we have that if
$e_1 \longrightarrow^* v_1$ and $ \overline{e_2} \longrightarrow^* \overline{v_2}$,
then $e_1' \longrightarrow^* v_1$ and $ \overline{e_2'} \longrightarrow^* \overline{v_2}$.
Then, the execution of $P'$ must contain an infinite number of method invocations $e_1'.f(\overline{e_2'})$.
\end{itemize}
\end{proof}

% \begin{acks}
% \end{acks}

\end{document}